\newcommand{\refthm}[1]{Theorem~\ref{#1}\xspace}
\newcommand{\refcon}[1]{Conjecture~\ref{#1}\xspace}
\newcommand{\refsec}[1]{Section~\ref{#1}\xspace}
\newcommand{\refeq}[1]{Equation~\eqref{#1}\xspace}
\newcommand{\itemref}[1]{(\ref{#1})\xspace}
\newcommand{\IFF}{\ensuremath{\mathrel{\Leftrightarrow}}}
\newcommand{\homo}{homomorphism\xspace}
\newcommand{\schuetz}{Sch\"utz\-en\-ber\-ger\xspace}
\newcommand{\CR}{Church-Rosser\xspace}
\newcommand{\swr}{sub\-word-redu\-cing\xspace}
\newcommand{\lr}{length-redu\-cing\xspace}
\newcommand{\wred}{weight-redu\-cing\xspace}
\newcommand{\IRR}{\mathrm{IRR}}
\newcommand{\eex}{\hspace*{\fill}\ensuremath{\Diamond}}
\newcommand\RAS[2]{\overset{#1}{\underset{#2}{\Longrightarrow}}}
\newcommand\ra[1]{\overset{#1}{\longrightarrow}}
\newcommand\LAS[2]{\overset{#1}{\underset{#2}{\Longleftarrow}}}
\newcommand\DAS[2]{\overset{#1}{\underset{#2}{\Longleftrightarrow}}}
\newcommand\RA[1]{\underset{#1}{\Longrightarrow}}
\newcommand\LA[1]{\underset{#1}{\Longleftarrow}}
\newcommand{\set}[2]{\left\{#1\mathrel{\left|\vphantom{#1}\vphantom{#2}\right.}#2\right\}}
\newcommand{\oneset}[1]{\left\{\mathinner{#1}\right\}}
\newcommand{\os}{\oneset}
\newcommand{\sm}{\setminus}
\newcommand{\es}{\emptyset}
\newcommand{\sse}{\subseteq}
\newcommand{\smallset}[1]{\left\{\mathinner{#1}\right\}}
\newcommand{\abs}[1]{\left|\mathinner{#1}\right|}
\newcommand{\Abs}[1]{\left\Vert\mathinner{#1}\right\Vert}
\newcommand{\N}{\ensuremath{\mathbb{N}}}
\newcommand{\Z}{\ensuremath{\mathbb{Z}}}
\newcommand{\PSPACE}{\ensuremath{\mathsf{PSPACE}}}
\renewcommand{\phi}{\varphi}
\newcommand{\eps}{\varepsilon}
\newcommand{\oo}{\omega}
\newcommand{\sig}{\sigma}
\newcommand{\Oh}{\mathcal{O}}
\newcommand{\wh}[1]{\widehat{ #1 }}
\newcommand{\SD}{\mathrm{SD}}
\newcommand{\LTL}{\mathrm{LTL}}
\newcommand{\cLTL}{\mathcal{L\hspace*{-0.1pt}T\hspace*{-2pt}L}}
\newcommand{\FO}{\mathrm{FO}}
\newcommand{\X}{\mathsf{X}}
\newcommand{\F}{\mathsf{F}}
\newcommand{\U}{\mathrel{\mathsf{U}}}
\newcommand{\cU}%
{\mathop{\textcircled{\raisebox{-0.55mm}{\textsf{U}}}}}
\newcommand{\ltrue}{\top}
\newcommand{\lfalse}{\bot}
\newcommand{\coloneq}{\mathrel{\mathord:\mathord=}}
\newtheorem{theorem}{Theorem}[section]
\newtheorem{definition}[theorem]{Definition}
\newtheorem{proposition}[theorem]{Proposition}
\newtheorem{conjecture}[theorem]{Conjecture}
\newtheorem{corollary}[theorem]{Corollary}
\newtheorem{myremark}[theorem]{Remark}
\newtheorem{myexample}[theorem]{Example}
\newenvironment{remark}{\begin{myremark}\normalfont}{\hspace*{\fill}{$\Diamond$}\end{myremark}}
\newenvironment{example}{\begin{myexample}\normalfont}{\hspace*{\fill}{$\Diamond$}\end{myexample}}
\setlist{itemsep=0pt,topsep=4pt}
\renewcommand{\theenumii}{\arabic{enumii}}
\newcommand{\ie}{\textit{i.e.}\xspace}
\newcommand{\wrt}{w.r.t.\xspace}
\newcommand{\mazu}{Mazur\-kie\-wizc\xspace}
\begin{document}

\title{A Survey on the \\ Local Divisor Technique}

\author{Volker Diekert\hspace*{2pt}\orcidlink{0000-0002-5994-3762} \and Manfred Kuf\-leitner\hspace*{2pt}
\orcidlink{0000-0003-3869-416X}}
  

\date{University of Stuttgart, FMI, Germany\\
\texttt{\{diekert,kufleitner\}\@fmi.uni-stuttgart.de}}

\maketitle

\begin{abstract}
  Local divisors allow a powerful induction scheme on the size of a monoid. We survey this technique by giving several examples of this proof method.
  These applications include linear temporal logic, rational expressions with Kleene stars restricted to prefix codes with bounded synchronization delay, Church-Rosser congruential languages, and Simon's Factorization Forest Theorem.
  We also introduce the notion of \emph{localizable language class} as a new abstract concept which unifies some of the proofs for the results above.

The current arXiv-version includes some additional material about 
codes of bounded synchronization delay as well as some updates concerning related literature. 
\end{abstract}

\section{Introduction}\label{sec:intro}

The notion of a \emph{local divisor} refers to a construction 
for finite monoids.  It appeared in this context first 
in \cite{dg06IC} where it was used by the authors as a tool in the proof that local future temporal logic is expressively complete (\wrt~first-order logic) for \mazu traces. The definition 
of a local divisor is very simple: Let $M$ be a finite monoid and $c \in M$. 
Then $cM \cap Mc$ is a semigroup, but it fails to be a submonoid unless 
$c$ is invertible. If  $c$ is not invertible then  $1 \notin cM \cap Mc$ and, as a consequence, 
$\abs{cM \cap Mc} < \abs{M}$. 
The idea is now to turn $cM \cap Mc$ into a monoid by 
defining a new multiplication by $xc \circ cy = xcy$.
This is well-defined and $M_c= (cM \cap Mc, \circ, c)$  becomes a monoid 
where $c$ is the unit. Moreover, if $c$ is not invertible then $M_c$ is a smaller monoid than $M$; and this makes the construction attractive for induction.
(Obviously,  the same idea works for $\os c \cup cMc$ and since $\os c \cup cMc \sse 
cM \cap Mc$ there is a choice here.) 
The original definition for a multiplication of type $xc \circ cy = xcy$ was given for associative algebras. It can be traced back to a technical report of Meyberg, \cite{Mey72}.
He coined the notion of a \emph{local algebra}. Just replace 
$M$ above by a finite dimensional associative algebra (with a unit element) over a field 
$k$. For example, $M$ is the algebra of $n \times n$ matrices over $k$. 
If $c\in M$ is not invertible then the vector space $cM \cap Mc$ has at least one dimension less and $(cM \cap Mc,+, \circ, c)$ is again 
an associative algebra with the unit element $c$. See also \cite{FeTo02} for applications of Meyberg's construction. In formal language theory, the concept of a local divisor is the algebraic mirror of an automata construction used by Thomas Wilke in his Habilitation \cite{Wil98} where parts of of it appeared in \cite{Wil99stacs}, too.

Despite (or more accurately \emph{thanks to}) its simplicity, the \emph{local-divisor technique} is quite powerful.   
For example, it was used in a new and simplified proof for the  Krohn-Rhodes  Theorem \cite{DiekertKS12fi}. 
In \cite{dgk08ijfcs} we applied it to derive results on small fragments of first-order definable languages. In \cite{DiekertKufleitner14tocs} we extended a classical result
of \schuetz from finite words to infinite words by
showing that \emph{$\oo$-rational expressions with bounded synchronization delay} characterize star-free languages. 
At ICALP 2012 we presented a paper which solved a 25 years old conjecture in formal language theory \cite{DiekertKRW15jacm}.
We showed that  regular languages are Church-Rosser congruential.
We come back to this result in more detail in \refsec{sec:crcl}. 
Our result on Church-Rosser languages was obtained in two steps. First, we had to show it for 
regular group languages, which is very difficult and technical. 
This part served as a base for induction. The second part uses 
induction using local divisors. This part is presented here, it is based on the paper \cite{DiekertKW12tcs}.
The construction of local divisors 
has also been an essential  tool in  Kuperberg's work 
on  a linear temporal logic for regular cost functions \cite{Kuperberg14lmcs} which was published 2014.

The journal version of our 2015-arXiv version appeared 2016 in \cite{DiekertK2016tcs}. More recent results are, for example, in
\cite{DartoisGK21,DiekertWalter16,GrayS2022}

\section{Local Divisors}\label{sec:locdiv}
We will apply the local divisor techniques mainly to monoids. However, it is instructive
to place ourselves first in the slightly more general setting of semigroups. Let $S= (S, \cdot)$ be a finite semigroup. 
A  \emph{divisor} $S'$ of $S$  is a homomorphic image of a subsemigroup. 
Let $c \in S$ be any element and consider   $cS \cap Sc$. We can turn the subset $cS \cap Sc$ into a semigroup by defining a new operation $\circ$ as follows:
$$ xc \circ cy = xcy.$$
A direct calculation shows that the operation $\circ$ is well-defined and associative.   
Hence, $S_c= (cS \cap Sc, \circ)$ is a semigroup. 
In order to see that $S_c$ is a divisor consider the 
following subsemigroup $S' = \set{x\in S}{cx \in Sc}$ of $S$. Note that $c \in S'$. 
Define $\phi: S' \to S_c$ by $\phi(x) = cx$. It is surjective since $z \in cS \cap Sc$ implies that we can write $z =cx$ with $x \in S'$. Moreover, $cxy= cx \circ cy$ and 
$S_c$ is the homomorphic image of $S'$. Therefore,  $S_c$ is divisor. We call it the \emph{local 
divisor at $c$}. We want to use $S_c$ for induction. Therefore we characterize next when 
$\abs {S_c} < \abs S$. Recall that $e \in S$ is called an \emph{idempotent} if $e^2 = e$. 
For every finite semigroup there is a natural number $\oo \in \N$ (for example $\oo = \abs S !$) 
such that $x^\oo$ is idempotent for every $x \in S$. 
An element $e$ is called a \emph{unit} if it has a left- and right inverse. In finite semigroups this means that  $e^\oo$ satisfies $e^\oo x = x e^\oo = x$ for all $x \in S$.
Thus, if $S$ contains a unit $e$ then it is monoid and $e^\oo$ becomes the uniquely defined neutral element $1 \in S$.
We have the following result.
 
\begin{proposition}\label{prop:semld}
Let $S$ be a semigroup and $S_c= (cS \cap Sc, \circ)$ be defined as above. 
\begin{enumerate}
\item If $S$ is a monoid, then $S_c= (cS \cap Sc, \circ, c)$ is a monoid and
$S_c$ is divisor in terms of monoids, \ie a homomorphic image of a submonoid $S'$ of $S$. 
\item If $c$ is a unit of $S$, then $S= \set{x\in S}{cx \in Sc}$ and $\phi: S \to S_c$, $x \mapsto cx$ is an isomorphism of monoids.
\item If $c$ is not a unit,  then $\abs {S_c} < \abs S$.
\end{enumerate}
\end{proposition}

\begin{proof}
 (a): Since $S$ is monoid we have $1 \in S'= \set{x\in S}{cx \in Sc}$ and
$S_c$ is the homomorphic image of the submonoid $S'$. 
 
 (b): Trivial. 
 
 (c): If $cS \cap Sc=S$, then we have $cS = S$ and $Sc = S$. This implies that $c$ is a unit. Indeed, we  have $c^\omega S = S = S c^\omega$. For every element $c^\omega x \in S$ we have $c^\omega \cdot c^\omega x = c^\omega x$. Thus, $c^\omega$ is neutral and $c^{\omega-1}$ is the inverse of $c$, \ie, $c$ is a unit. 
Therefore, if $c$ is not a unit then $\abs{S_c} < \abs{S}$.
  \end{proof}

\begin{remark}\label{rem:smallld}
It is clear that $(\os{cc} \cup cSc, \circ)$ is a subsemigroup of $(cS \cap Sc, \circ)$. Moreover, if $S$ is a monoid then 
$(\os c \cup cSc , \circ, c)$ is a submonoid  of $(cS \cap Sc, \circ,c )$. Hence by slight abuse of language, we might call  $(\os{cc} \cup cSc , \circ)$ (resp.\ $(\os c \cup cSc , \circ, c)$) a local divisor 
of $S$, too. Moreover, 
if $c \in S$ is an idempotent, then $(cSc , \circ)= (cSc , \cdot)$ is the usual local monoid at $c$. The advantage is that $\os{cc} \cup cSc$ (resp.{} $\os c \cup cSc$) might be smaller than $cS \cap Sc$. However in worst case estimations there is  no difference. 
\end{remark}

\section{Localizable language classes}\label{sec:loc}

A \emph{language class} $\mathcal{V}$ assigns to every finite alphabet $A$ a set of languages $\mathcal{V}(A^*) \subseteq 2^{A^*}$. A language class $\mathcal{V}$ is \emph{left-localizable} if for all finite alphabets $A$ and $T$ the following properties hold:
\begin{enumerate}
\item\label{aaa:loc} $\emptyset, A^* \in \mathcal{V}(A^*)$.
\item\label{bbb:loc} If $K,L \in \mathcal{V}(A^*)$, then $K \cup L \in \mathcal{V}(A^*)$.
\item\label{ccc:Lloc} For every $c \in A$, the alphabet $B = A \setminus \smallset{c}$ satisfies:
\begin{enumerate}
\item\label{cc1:Lloc} If $K \in \mathcal{V}(B^*)$, then $K \in \mathcal{V}(A^*)$.
\item\label{cc2:Lloc} If $K \in \mathcal{V}(A^*)$ and $L \in \mathcal{V}(B^*)$, then $KcL \in \mathcal{V}(A^*)$.
\item\label{cc3:Lloc} If $K \in \mathcal{V}(B^*)$ and $L \in \mathcal{V}(A^*)$ with $L \subseteq cA^*$, then $KL \in \mathcal{V}(A^*)$.
\item\label{cc4:Lloc} Suppose $g : B^* \to T$ is a mapping
with $g^{-1}(t) \in \mathcal{V}(B^*)$ for all $t \in T$. Moreover,
let $\sig: (cB^*)^* \to T^*$ be defined by 
$\sig(cu_1 \cdots cu_k) =  g(u_1) \cdots g(u_k)$ for 
$u_i \in B^*$. If $K \in \mathcal{V}(T^*)$, then $\sig^{-1}(K) \in \mathcal{V}(A^*)$.
\end{enumerate}
\end{enumerate}
Being \emph{right-localizable} is defined by the right dual
of left-localizability. Properties \itemref{aaa:loc}, \itemref{bbb:loc} and \itemref{cc1:Lloc} are unchanged, but the remaining conditions are replaced by
\begin{enumerate}[start=3]
\item For every $c \in A$, the alphabet $B = A \setminus \smallset{c}$ satisfies:
\renewcommand{\theenumii}{\arabic{enumii}'}
\begin{enumerate}[start=2]
\item\label{cc2:Rloc} If $K \in \mathcal{V}(B^*)$ and $L \in \mathcal{V}(A^*)$, then $KcL \in \mathcal{V}(A^*)$.
\item\label{cc3:Rloc} If $K \in \mathcal{V}(A^*)$ with $K \subseteq A^* c$ and $L \in \mathcal{V}(B^*)$, then $KL \in \mathcal{V}(A^*)$.
\item\label{cc4:Rloc} Suppose $g : B^* \to T$ is a mapping
with $g^{-1}(t) \in \mathcal{V}(B^*)$ for all $t \in T$. Moreover,
let $\sig: (B^*c)^* \to T^*$ be defined by 
$\sig(u_1c \cdots u_kc) =  g(u_1) \cdots g(u_k)$ for 
$u_i \in B^*$. If $K \in \mathcal{V}(T^*)$, then $\sig^{-1}(K) \in \mathcal{V}(A^*)$.
\end{enumerate}
\end{enumerate}
A class of languages $\mathcal{V}$ is \emph{localizable} if it is left-localizable or right-localizable.

\begin{theorem}\label{thm:loc}
If $L \subseteq A^*$ is recognized by a finite aperiodic monoid, then $L \in \mathcal{V}(A^*)$ for every localizable language class $\mathcal{V}$. This means that every localizable language class contains all 
aperiodic languages.
\end{theorem}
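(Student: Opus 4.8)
My plan is to argue by a nested induction: an outer induction on the cardinality of a recognising monoid, and for each fixed monoid a secondary induction on the size of the alphabet. I would first reduce the claim to single classes. If $\eta : A^* \to M$ recognises $L$ with $M$ finite aperiodic, then $L$ is a finite union of sets $\eta^{-1}(m)$ with $m \in M$, so by closure under union~\itemref{bbb:loc} it is enough to show $\eta^{-1}(m) \in \mathcal{V}(A^*)$ for every $m$. I would also dispose of right-localizable classes at the outset by duality: aperiodic languages are closed under reversal, and reversing every language turns a right-localizable class into a left-localizable one, so I may assume $\mathcal{V}$ is left-localizable. If $\abs M = 1$ then $\eta^{-1}(m)$ is $\emptyset$ or $A^*$, hence in $\mathcal{V}(A^*)$ by~\itemref{aaa:loc}; and if $\eta$ is not surjective, its image is a strictly smaller aperiodic submonoid of $M$ and the outer hypothesis applies. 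So the interesting case is $\abs M \ge 2$ with $\eta$ surjective, where I run the secondary induction on $\abs A$.

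In that case some letter $c$ has $d := \eta(c) \neq 1$ (otherwise $M$ would be trivial); put $B = A \setminus \smallset c$. Since $A^* = B^* \sqcup B^*(cB^*)^+$, by~\itemref{bbb:loc} I may treat $\eta^{-1}(m) \cap B^*$ and $\eta^{-1}(m) \cap B^*(cB^*)^+$ separately. The first equals $(\eta|_{B^*})^{-1}(m)$, a language over the \emph{smaller} alphabet $B$ recognised by (a submonoid of) $M$; the secondary hypothesis gives $(\eta|_{B^*})^{-1}(m) \in \mathcal{V}(B^*)$, and~\itemref{cc1:Lloc} lifts it to $\mathcal{V}(A^*)$. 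For the second part I would use the local divisor $M_d = (dM \cap Md, \circ, d)$: an aperiodic monoid has no unit other than $1$, so $d$ is not a unit and $\abs{M_d} < \abs M$ by item~\itemref{ccc:semld} of~\refprop{prop:semld}, while $M_d$ is a divisor of $M$ by item~\itemref{aaa:semld} and hence aperiodic.

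The technical core is as follows. Define $g : B^* \to M_d$ by $g(x) = d\,\eta(x)\,d \in dMd \subseteq M_d$; this is only a map, not a homomorphism, which is exactly why~\itemref{cc4:Lloc} is stated for arbitrary maps $g$. Each fibre $g^{-1}(t)$ is a finite union of sets $(\eta|_{B^*})^{-1}(n)$ with $dnd = t$, hence lies in $\mathcal{V}(B^*)$ by the secondary hypothesis and~\itemref{bbb:loc}. Let $\sig : (cB^*)^* \to M_d^*$ be the map associated with $g$ in~\itemref{cc4:Lloc} and let $\wh\eta : M_d^* \to M_d$ be the evaluation homomorphism of the free monoid over $M_d$; unfolding $xc \circ cy = xcy$ yields the identity $\wh\eta(\sig(v)) = \eta(v)\,d$ for all $v \in (cB^*)^*$, since $(d\eta(u_1)d) \circ \cdots \circ (d\eta(u_j)d) = d\,\eta(u_1)\,d\cdots d\,\eta(u_j)\,d$. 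Factoring an arbitrary $w \in B^*(cB^*)^+$ uniquely as $w = u_0\,v'\,c\,u$ with $u_0,u \in B^*$ and $v' \in (cB^*)^*$ (the prefix before the first $c$, the part up to the last $cB^*$-block, and that block), I obtain $\eta(w) = \eta(u_0)\cdot\wh\eta(\sig(v'))\cdot\eta(u)$ and therefore
\[
  \eta^{-1}(m)\cap B^*(cB^*)^+ \;=\!\!\! \bigcup_{\substack{p,r\in M,\ s\in M_d\\ p\,s\,r = m}}\!\!\! (\eta|_{B^*})^{-1}(p)\cdot\Bigl(\sig^{-1}\bigl(\wh\eta^{-1}(s)\bigr)\cdot c\cdot(\eta|_{B^*})^{-1}(r)\Bigr).
\]
In each summand $\wh\eta^{-1}(s) \in \mathcal{V}(M_d^*)$ by the \emph{outer} hypothesis (as $\abs{M_d} < \abs M$), so $\sig^{-1}(\wh\eta^{-1}(s)) \in \mathcal{V}(A^*)$ by~\itemref{cc4:Lloc}; combining with $(\eta|_{B^*})^{-1}(r) \in \mathcal{V}(B^*)$ via~\itemref{cc2:Lloc} yields a language in $\mathcal{V}(A^*)$ that is contained in $cA^*$; then~\itemref{cc3:Lloc} with the factor $(\eta|_{B^*})^{-1}(p) \in \mathcal{V}(B^*)$, followed by a finite union via~\itemref{bbb:loc}, finishes the step.

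I expect the delicate point to be arranging the induction so that it terminates: the $B^*$-fringes $u_0, u$ and the fibres of $g$ are recognised over a \emph{smaller alphabet} but still by $M$ (which is what forces the secondary induction on $\abs A$), whereas the bulk of the word --- everything strictly after the first $c$ except for the final $B^*$-block --- must be transported into the strictly \emph{smaller} monoid $M_d$, and reconciling these is precisely what the identity $\wh\eta\circ\sig = (\,\cdot\,)\,d$ together with axiom~\itemref{cc4:Lloc} achieves. Guessing the correct auxiliary map $g$ and target alphabet $T = M_d$, and noticing that $g$ need not be a homomorphism, is the conceptual hurdle; once the word has been cut into $u_0$, $v'$, $u$, verifying that~\itemref{cc1:Lloc}--\itemref{cc4:Lloc} reassemble the pieces is routine bookkeeping.
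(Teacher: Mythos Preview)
Your proof is correct and follows essentially the same route as the paper: lexicographic induction on $(\abs{M},\abs{A})$, cut each word at its first and last occurrence of a letter $c$ with $\eta(c)\neq 1$, and push the middle block into the local divisor $M_d$ via axiom~\itemref{cc4:Lloc}. The only noteworthy difference is your choice of the auxiliary alphabet: the paper takes $T=\eta(B^*)\subseteq M$ with $g=\eta|_{B^*}$ and then defines a separate homomorphism $f:T^*\to M_c$ by $f(g(v))=\eta(cvc)$, whereas you take $T=M_d$ directly with $g(x)=d\,\eta(x)\,d$ and let $\wh\eta:M_d^*\to M_d$ be plain evaluation. Both choices give the same key identity $\wh\eta\bigl(\sigma(v)\bigr)=\eta(v)\,d$ (equivalently $f\bigl(\sigma(v)\bigr)=\eta(vc)$ in the paper's notation), so this is a cosmetic variation rather than a different argument.
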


\begin{proof}
We can assume that $\mathcal{V}$ be left-localizable; the situation with $\mathcal{V}$ being right-localiza\-ble is symmetric.
Let $h : A^* \to M$ be a homomorphism to a finite aperiodic monoid $M$. It is enough to show $h^{-1}(p) \in \mathcal{V}(A^*)$ for all $p \in M$. We proceed by induction on $(\abs{M},\abs{A})$ with lexicographic order. If $h(A^*) = \smallset{1}$, then either $h^{-1}(p) = \es$ or $h^{-1}(p) = A^*$; 
  and we are done. Hence, we can assume that there is a letter
  $c \in A$ with $h(c) \neq 1$. Let  $B = A \setminus \smallset{c}$ and  $g : B^* \to M$ be the restriction of $h$ to $B^*$. For  all $p \in M$ we have
  \begin{equation}\label{eq:1}
    h^{-1}(p) \;=\;
      g^{-1}(p) \,\cup \!\!\!\!
      \bigcup_{\scriptsize\begin{array}{c}
          p = q r s
        \end{array}} \!\!\!\!
      g^{-1}(q) \cdot
      \big(h^{-1}(r) \cap
      c \/ A^* \cap A^* \hspace*{-0.5pt} c \big) \cdot
      g^{-1}(s)
  \end{equation}
  by factoring every word at the first and the last occurrence of $c$.
  Induction on the size of the
  alphabet yields $g^{-1}(p) \in \mathcal{V}(B^*)$ for all $p \in M$.
  By the closure properties of $\mathcal{V}$, it suffices to show $ \big(h^{-1}(r) \cap
      c \/ A^* \cap A^* \hspace*{-0.5pt} c \big) \cdot g^{-1}(s) \in \mathcal{V}(A^*)$ for every $r \in \varphi(c) M \cap M \varphi(c)$  and $s \in h(B^*)$.
  Let $T = h(B^*)$. 
  In the remainder of this
  proof we will use~$T$ as a finite alphabet.
  The substitution $\sigma: ( c B^* )^* \to T^*$ is defined by 
  \begin{equation*}
    \sigma(c v_1  \cdots c v_k)  =  g(v_1) \cdots g(v_k)
  \end{equation*}
  for $v_i \in B^*$,
  and the homomorphism $f : T^* \to M_c$ to the local divisor $M_c = 
 \big(h(c) M \cap M h(c), \circ, h(c)\big)$ is defined by
  \begin{equation*}
     f\big(g(v)\big) = h(cvc)
  \end{equation*}
  for $v \in B^*$.
  This is well-defined since $h(cvc) = h(c) \hspace*{1pt} g(v) \hspace*{1pt} h(c)$ only depends on $g(v)$ and not on the word $v$ itself.
  Consider a word $w = c v_1 \cdots c v_k$ with $k \geq 0$ and $v_i \in
  B^*$. Then
  \begin{align*}
    f \bigl(\sigma(w) \bigr)
    &= f\bigl( g(v_1) g(v_2) \cdots g(v_k) \bigr) \\
    &= h(c v_1 c) \circ h(c v_2 c) \circ 
    \cdots \circ h(c v_k c) \\
    &= h(c v_1 c v_2 \cdots c v_k c) 
    = h(wc).
  \end{align*} 
  Thus, we have $wc \in h^{-1}(r)$ if and
  only if $w \in \sigma^{-1} \bigl( f^{-1}(r) \bigr)$.  This shows $h^{-1}(r)
  \cap c\/A^* \cap A^* c = \sigma^{-1} \bigl( f^{-1}(r) \bigr) \cdot c$ for every
  $r \in h(c) M \cap M h(c)$.
  It follows that
  \begin{equation*}
    \big(h^{-1}(r) \cap
      c \/ A^* \cap A^* \hspace*{-0.5pt} c \big) \cdot h^{-1}_{c}(s)
  = \sigma^{-1} ( K ) \cdot c \cdot g^{-1}(s)
  \end{equation*}
    for $K = f^{-1}(r)$.
    The monoid~$M_c$ is aperiodic and $\abs{M_c} < \abs{M}$. Induction on the size of the monoid yields $K \in \mathcal{V}(T^*)$, and induction on the alphabet shows $g^{-1}(t) \in \mathcal{V}(B^*)$ for all $t \in T$. By the closure properties of $\mathcal{V}$ we obtain $\sigma^{-1}(K) \in \mathcal{V}(A^*)$ and $\sigma^{-1}(K) \cdot c \cdot g^{-1}(s) \in \mathcal{V}(A^*)$.
   This concludes the proof.
\end{proof}

\section{Linear temporal logic}\label{sec:mkltl}

By Kamp's famous theorem~\cite{kam68}, \emph{linear temporal logic} LTL over 
words has the same expressive power as first-order logic $\FO[<]$. In an algebraic setting, 
one shows first that every first-order definable language $L\sse A^*$ 
is aperiodic. This is relatively easy and no local divisor technique applies here.  In this section we concentrate on the converse. We give a simple proof 
that every aperiodic language is LTL-definable. 
 We give the proof for finite words, only. 
However, the basic proof techniques generalize to infinite words~\cite{dg08SIWT} and also to Mazurkiewicz traces~\cite{dg02jcss}.

The syntax of \emph{linear temporal logic $\LTL(A)$} over an alphabet $A$ is defined as follows:
\begin{equation*}
  \varphi ::= \ltrue \mid a \mid \neg \varphi \mid (\varphi \vee \varphi) \mid \X \varphi \mid (\varphi \U \varphi)
\end{equation*}
for $a \in A$. The modality $\X$ is for ``ne$\X$t'' and $\U$ is for ``$\U$ntil''. As usual, we omit the bracketing whenever there is no confusion.
For the semantics we interprete a word $u = a_1 \cdots a_n$ with $a_i \in A$ as a labeled linear order with positions $\oneset{1,\ldots,n}$, and position $i$ is labeled by $a_i$. We write $u,i \models \varphi$ if the word $u$ at position $i$ models $\varphi$, and we write $u,i \not\models \varphi$ if this is not the case. The semantics of $\LTL(A)$ is defined by:
\begin{alignat*}{2}
  u,i &\models \ltrue &\qquad&\text{is always true} \\
  u,i &\models a &\quad\IFF\quad& a_i = a \\
  u,i &\models \neg \varphi &\quad\IFF\quad& u,i \not\models \varphi \\
  u,i &\models \varphi \vee \psi &\quad\IFF\quad& u,i \models \varphi \,\text{ or }\, u,i \models \psi \\
  u,i &\models \X \varphi &\quad\IFF\quad& i<n \,\text{ and }\, u,i+1 \models \varphi \\
  u,i &\models \varphi \U \psi &\quad\IFF\quad& \text{there exists $k \in \oneset{i,\ldots,n}$ such that }\, u,k \models \psi \\[-1mm]
  &&&\text{and for all $j \in \oneset{i,\ldots,k-1}$ we have }\, u,j \models \varphi
\end{alignat*}
The formula $\varphi \U \psi$ holds at position~$i$ if there exists a position~$k\geq i$ such that~$\psi$ holds at~$k$ and all positions from $i$ to $k-1$ satisfy $\varphi$. 
A formula $\varphi$ in $\LTL(A)$ defines the language 
\begin{equation*}
  L(\varphi) = \set{u \in A^+}{u,1 \models \varphi}.
\end{equation*}
This means that when no position is given, then we start at the first position of a nonempty word.
We introduce the following macros:
\begin{align*}
  \lfalse &\ \coloneq \ \neg \ltrue 
  & B &\ \coloneq \ {\textstyle \bigvee_{b\in B} b} \qquad \text{for $B \subseteq A$} \\
  \varphi \wedge \psi &\ \coloneq \ \neg (\neg \varphi \vee \neg \psi)
  & \F \varphi &\ \coloneq \ \ltrue \U \varphi
\end{align*}
The macro $\F \varphi$ (for ``$\F$uture'') holds at position $i$ if $\varphi$ holds at some position $k \geq i$.
For $L,K \sse A^*$ we define a variant of the Until-modality on languages by
\begin{equation*}
  K \cU L = \set{vw \in A^*}{\text{$w \in L$, $\forall \, v=pq$ with $q \neq \varepsilon$: $qw \in K$}}.
\end{equation*}
The language class $\cLTL$ resembles the behaviour of $\LTL$ by using a more global semantics. The languages in $\cLTL(A^*)$ are inductively defined by:
\begin{itemize}
\item $\emptyset \in \cLTL(A^*)$.
\item If $K,L \in \cLTL(A^*)$ and $a \in A$, then $A^* \setminus L, K \cup L, aL, K \cU L \in \cLTL(A^*)$.
\end{itemize}
The formal connection between $\cLTL$ and $\LTL$ is given by the following proposition.

\begin{proposition}\label{prp:LTLcLTL}
  We have $L \in \cLTL(A^*)$ if and only if $L \setminus \smallset{\varepsilon}$ is definable in $\LTL(A)$.
\end{proposition}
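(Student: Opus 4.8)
The plan is to prove both implications by structural induction, matching the inductive definition of $\cLTL(A^*)$ on one side against the inductive definition of $\LTL(A)$ formulas on the other. The key technical point is that the ``global'' semantics of $\cLTL$ (where membership of a word $vw$ in $K \cU L$ quantifies over all non-trivial suffixes $qw$) should correspond exactly to evaluating an $\LTL$ formula at the first position of a word, because $\LTL$'s until-modality, when anchored at position $1$, ranges precisely over the suffixes of the word. So I expect a clean correspondence $L(\phi) = L_\phi \setminus \smallset{\eps}$ for a suitable translation $\phi \mapsto L_\phi$ and, conversely, $L_\phi$ reconstructed from $L$.

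\textbf{Direction ``$\Leftarrow$''.} First I would show: if $\phi \in \LTL(A)$, then $L(\phi) \in \cLTL(A^*)$. Actually it is cleaner to prove a slightly stronger statement by induction on $\phi$ that carries position information — something like: for each $\phi$ there is $L_\phi \in \cLTL(A^*)$ such that for every nonempty $u$ and every position $i$, we have $u,i \models \phi$ iff the suffix $a_i \cdots a_n$ lies in $L_\phi$ (where $u = a_1 \cdots a_n$); equivalently $L_\phi$ collects exactly the suffixes at which $\phi$ holds. The atomic case $\phi = a$ gives $L_a = aA^*$, which is in $\cLTL(A^*)$ since $aA^* = a A^*$ and $A^* = A^* \setminus \es$ (using $\es \in \cLTL$, closure under complement, and closure under $a(\cdot)$). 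The case $\phi = \ltrue$ gives $L_\ltrue = A^+$, again obtainable as $\bigcup_{a \in A} aA^*$. Boolean connectives: $L_{\neg\phi} = A^+ \setminus L_\phi$ wait — one must be careful, since complement in $\cLTL$ is relative to $A^*$, so $L_{\neg \phi} = (A^* \setminus L_\phi)$ intersected with $A^+$; but intersection is available via De Morgan from union and complement, and $A^+ \in \cLTL(A^*)$. For $\phi \vee \psi$ use union. The interesting cases are the modalities. For $\X \phi$: $u,i \models \X\phi$ iff $i < n$ and the suffix starting at $i+1$ is in $L_\phi$, so $L_{\X\phi} = \bigcup_{a \in A} a L_\phi$, which is in $\cLTL(A^*)$ by closure under $a(\cdot)$ and union. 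For $\phi \U \psi$: by the semantics, the suffix $w_i = a_i \cdots a_n$ satisfies $\phi \U \psi$ iff there is $k \geq i$ with $w_k \in L_\psi$ and $w_j \in L_\phi$ for all $i \le j < k$. Writing $w_i = v w$ where $w = w_k \in L_\psi$ and the intermediate suffixes $w_j$ for $i \le j < k$ are exactly the words $q w$ for the nonempty proper factorizations $v = pq$, this is precisely the condition $w_i \in L_\phi \cU L_\psi$. Hence $L_{\phi\U\psi} = L_\phi \cU L_\psi \in \cLTL(A^*)$. Then $L(\phi) = \set{u \in A^+}{u,1 \models \phi} = L_\phi \cap A^+ = L_\phi \setminus \smallset{\eps}$, which is what we want.

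\textbf{Direction ``$\Rightarrow$''.} Conversely, I would show by induction on the definition of $\cLTL$ that for each $L \in \cLTL(A^*)$ there is an $\LTL(A)$ formula $\phi_L$ with the same ``suffix'' meaning, namely $u,i \models \phi_L$ iff $a_i \cdots a_n \in L$ for all nonempty $u$ and positions $i$; then $L(\phi_L) = L \setminus \smallset{\eps}$ as required. Base case $\es$: take $\phi_\es = \lfalse$. Complement: $\phi_{A^* \setminus L} = \neg \phi_L$ — but note $\neg \phi_L$ holds at position $i$ of a nonempty word exactly when the suffix is not in $L$, i.e.\ when it is in $A^* \setminus L$ restricted to nonempty suffixes, which matches since every suffix of a nonempty word at a position $i$ is nonempty. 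Union: $\phi_{K \cup L} = \phi_K \vee \phi_L$. The letter operation: $\phi_{aL}$ should hold at position $i$ iff $a_i \cdots a_n \in aL$, i.e.\ iff $a_i = a$ and $a_{i+1}\cdots a_n \in L$, so $\phi_{aL} = a \wedge \X \phi_L$ (and if $L \ni \eps$ we also need the case $a_i = a$, $i = n$ — this is still captured since $\X\phi_L$ is false at the last position, so one actually needs $\phi_{aL} = a \wedge (\X \phi_L \vee \neg \X \ltrue)$ when $\eps \in L$; I would handle the $\eps \in L$ subtlety explicitly). Finally $K \cU L$: using the analysis above in reverse, $\phi_{K \cU L} = \phi_K \U \phi_L$ does the job, since $w_i \in K \cU L$ unfolds to exactly the until-semantics at position $i$.

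\textbf{Main obstacle.} The routine Boolean and next-step cases are straightforward; the genuine care is needed in two places. First, the treatment of $\eps$: $\cLTL$ lives in $2^{A^*}$ and can contain $\eps$, whereas $\LTL$ formulas only ever speak about nonempty words and positions within them, so at every step involving the letter-modality and complement one must track whether $\eps$ is in the language and patch the formula accordingly (typically with a $\neg\X\ltrue$ ``at the end'' guard or by intersecting with $A^+$). Second, and this is the conceptual heart, verifying that the global operator $\cU$ on languages matches the positional $\U$-modality requires the observation that when one evaluates an $\LTL$ formula at position $i$ of $u$, only the suffix $a_i \cdots a_n$ matters, and the proper nonempty suffixes of that suffix are exactly the words $q w$ arising in the definition of $K \cU L$; getting the quantifier boundaries (``$q \neq \eps$'', ``$k \geq i$'', ``$j \leq k-1$'') to line up exactly is the step I would write out most carefully. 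Everything else is bookkeeping over the two parallel inductions.
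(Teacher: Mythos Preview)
Your overall strategy matches the paper's: two structural inductions, one per direction, with the same case-by-case translations. The treatment of $\top$, $a$, $\neg$, $\vee$, $\X$, and of $\emptyset$, complement, union, $aL$ is essentially identical to the paper's, including the $\varepsilon$-patch $a \wedge (\X\varphi_L \vee \neg\X\top)$ for $aL$ when $\varepsilon \in L$.

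There is, however, a genuine gap in your ``$\Rightarrow$'' direction at the $\cU$ case. You claim $\varphi_{K \cU L} = \varphi_K \U \varphi_L$ suffices, and you explicitly list the $\varepsilon$-bookkeeping as relevant only to ``the letter-modality and complement''. But the same subtlety bites $\cU$: if $\varepsilon \in L$, a word $u$ can lie in $K \cU L$ by taking the decomposition $u = u \cdot \varepsilon$, which only requires that every nonempty suffix of $u$ lie in $K$ --- no nonempty suffix need lie in $L$ at all. Your invariant forces $\varphi_L$ to be false at every position of such a $u$ (positions correspond to nonempty suffixes), so $\varphi_K \U \varphi_L$ fails. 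Concretely, with $A = \{a,b\}$, $K = bA^*$, $L = \{\varepsilon\}$ one gets $K \cU L = b^*$, hence $(K\cU L)\setminus\{\varepsilon\} = b^+$; but $\varphi_K = b$, $\varphi_L = \bot$, and $b \U \bot$ defines the empty language.

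The paper repairs exactly this: when $\varepsilon \in L$ it sets
\[
\varphi_{K \cU L} \;=\; (\varphi_K \U \varphi_L) \ \vee\ \neg\F\neg\varphi_K,
\]
the second disjunct expressing ``$\varphi_K$ holds at every position'', i.e.\ every nonempty suffix is in $K$. With this added disjunct your induction goes through; without it the step fails.
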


\begin{proof}
  We first show $L(\varphi) \in \cLTL(A^*)$ for every formula $\varphi \in \LTL(A)$. We have $A^* = A^* \setminus \emptyset \in \cLTL(A^*)$. For $\varphi \coloneq \ltrue$ we have $L(\ltrue) = A^+ = \bigcup_{a \in A} aA^* \in \cLTL(A^*)$. For $\varphi \coloneq a$ we have $L(a) = aA^* \in \cLTL(A^*)$. The construction for negations is $L(\neg \psi) = L(\ltrue) \setminus L(\psi)$, and disjunctions translate into unions. If $\varphi \coloneq \X \psi$, then $L(\X \psi) = \bigcup_{a \in A} a L(\psi)$. Finally, if $\varphi \coloneq \psi_1 \U \psi_2$, then $L(\varphi) = L(\psi_1) \cU L(\psi_2)$.
  
  For the converse, we show that for every language $L \in \cLTL(A^*)$ there exists a formula $\varphi_L \in \LTL(A)$ such that $L(\varphi_L) = L \setminus \smallset{\varepsilon}$. If $L = \emptyset$, then $\varphi_\emptyset = \lfalse$. Complements translate into negations, and unions translate into disjunctions. If $\varepsilon \not\in K$, then the formula for $L = aK$ is $\varphi_{aK} \coloneq a \wedge \X \varphi_K$. If $\varepsilon \in K$, then the formula of $L = aK$ is $\varphi_{aK} \coloneq a \wedge (\neg\X\ltrue \vee \varphi_{K} )$. If $L = K_1 \cU K_2$ for $\varepsilon \not\in K_2$, then $\varphi_L \coloneq \varphi_{K_1} \U \varphi_{K_2}$. Finally, if $L = K_1 \cU K_2$ for $\varepsilon \in K_2$, then $\varphi_L \coloneq (\varphi_{K_1} \U \varphi_{K_2}) \vee \neg \F \neg \varphi_{K_1}$; the formula $\neg \F \neg \varphi_{K_1}$ says that all positions satisfy $\varphi_{K_1}$.
\end{proof}

\begin{proposition}\label{prp:cLTLloc}
  The language class $\cLTL$ is left-localizable.
\end{proposition}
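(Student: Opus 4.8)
The plan is to check that $\cLTL$ has the four defining properties of a left-localizable class. Properties~\itemref{aaa:loc} and~\itemref{bbb:loc} are immediate: $\es \in \cLTL(A^*)$ and $A^* = A^* \sm \es \in \cLTL(A^*)$, and closure under union is built into the definition. Before turning to~\itemref{ccc:Lloc} I would record a handful of elementary consequences of the definition of $\cLTL$: it is closed under intersection and set difference (via complement and union), $\os{\varepsilon} \in \cLTL(A^*)$, and --- the tool used over and over --- $A^* \cU L = A^* L$ for $L \in \cLTL(A^*)$. From the last identity one gets $A^* c A^* = A^* \cU (cA^*) \in \cLTL(A^*)$, hence $B^* = A^* \sm A^* c A^* \in \cLTL(A^*)$, $BA^* = \bigcup_{b \in B} bA^* \in \cLTL(A^*)$, and $A^* \sm cA^* = \os{\varepsilon} \cup BA^* \in \cLTL(A^*)$.

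For~\itemref{cc1:Lloc} I would use induction on the structure of $K \in \cLTL(B^*)$. The cases $K = \es$, $K = K_1 \cup K_2$, and $K = aK_0$ with $a \in B$ are routine; for $K = B^* \sm K_0$ one writes $B^* \sm K_0 = B^* \cap (A^* \sm K_0)$ and uses $B^* \in \cLTL(A^*)$. The one genuine point is the case $K = K_1 \cU K_2$: one must check that the Until-operation computed in $B^*$ agrees with the one computed in $A^*$ when $K_1, K_2 \sse B^*$. This holds because in $K_1 \cU K_2$ the whole prefix of a word $vw$ already has to satisfy $vw \in K_1$ (take the factorization with empty left part), and $K_1 \sse B^*$ then forces $vw \in B^*$.

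Conditions~\itemref{cc2:Lloc} and~\itemref{cc3:Lloc} I would both handle by structural induction, on $K \in \cLTL(A^*)$ for the former and on $K \in \cLTL(B^*)$ for the latter, via four uniform reductions. Since $L \sse B^*$ (resp.\ $L \sse cA^*$), a word in $KcL$ has a unique factorization at its last $c$ (resp.\ a word in $KL$ at its first $c$); this makes the complement case work, as $(A^* \sm K_0)cL = (A^* cL) \sm (K_0 cL)$ and $(B^* \sm K_0)L = (B^* L) \sm (K_0 L)$. Unions distribute, prefixing by a single letter commutes with the constructions, and the Until-case rests on $(K_1 \cU K_2)cL = (K_1 cL) \cU (K_2 cL)$ and $(K_1 \cU K_2)L = (K_1 L) \cU (K_2 L)$, each verified by pushing suffixes through the fixed tail $cL$ (resp.\ $L$). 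For~\itemref{cc2:Lloc} the base facts needed are $cL \in \cLTL(A^*)$ (from~\itemref{cc1:Lloc}) and $A^* cL = A^* \cU (cL) \in \cLTL(A^*)$; for~\itemref{cc3:Lloc} the key extra ingredient is $B^* L = (A^* \sm cA^*) \cU L \in \cLTL(A^*)$, which holds because demanding that every nonempty suffix of the prefix avoid beginning with $c$ is exactly demanding that the prefix lie in $B^*$.

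The main obstacle is property~\itemref{cc4:Lloc}. I would first observe that $\sigma$ is defined exactly on $(cB^*)^* = \os{\varepsilon} \cup cA^*$, that every $c$ in such a word sits at a block boundary, and that factorizations of $\sigma(w)$ over $T^*$ correspond to the suffixes of $w$ beginning with $c$, together with the empty suffix. The argument is then induction on the structure of $K \in \cLTL(T^*)$: the cases $\es$, complement (intersect with $\os{\varepsilon} \cup cA^*$) and union are easy, and the case $K = tK_0$ reduces to $\sigma^{-1}(tK_0) = c \cdot g^{-1}(t) \cdot \sigma^{-1}(K_0)$, which lies in $\cLTL(A^*)$ by~\itemref{cc1:Lloc}, \itemref{cc3:Lloc} and closure under prefixing by a letter (after peeling $\varepsilon$ off $\sigma^{-1}(K_0)$ if it is there). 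The crux is the case $K = K_1 \cU K_2$: I claim $\sigma^{-1}(K_1 \cU K_2) = \big( X \cU \sigma^{-1}(K_2) \big) \cap \big( \os{\varepsilon} \cup cA^* \big)$ with $X = BA^* \cup \big( \sigma^{-1}(K_1) \cap cA^* \big)$. The point of the summand $BA^*$ is to let $X$ accept, for free, every suffix of the input that falls strictly inside a $B^*$-block, so that the global Until only constrains the $c$-aligned suffixes, on which it reproduces precisely the condition defining $K_1 \cU K_2$. Carefully verifying this identity, in both directions and keeping track of which suffixes are actually constrained, is the technical heart of the proof; the rest is bookkeeping with the closure properties collected above.
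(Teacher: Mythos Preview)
Your overall plan is sound and close to the paper's, but there is a genuine error in the Until case of~\itemref{cc3:Lloc}: the identity $(K_1 \cU K_2)L = (K_1 L)\cU(K_2 L)$ is false. Take $A=\{a,c\}$, $B=\{a\}$, $K_1=\{\varepsilon\}$, $K_2=\{a\}$, $L=cA^*$. Then $K_1\cU K_2=\{a\}$, so $(K_1\cU K_2)L=acA^*$; but $cac\in(K_1L)\cU(K_2L)$, witnessed by $v'=c$, $w'=ac\in K_2L$, since the single suffix $q'=c$ gives $cac\in cA^*=K_1L$. The asymmetry with~\itemref{cc2:Lloc} is real: in $(K_1cL)\cU(K_2cL)$ the boundary is the \emph{last} $c$, which always sits inside the fixed tail $cL$ regardless of $v'$, so ``pushing suffixes through'' works. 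In~\itemref{cc3:Lloc} the boundary is the \emph{first} $c$, and nothing prevents the (unconstrained) prefix $v'$ from containing a $c$; once it does, the $K_1$--$L$ split of $q'w'$ no longer lines up with the intended one.

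The fix is small and in the same spirit as your~\itemref{cc4:Lloc} argument: use
\[
(K_1\cU K_2)L \;=\; \bigl(K_1L\cap BA^*\bigr)\cU(K_2L).
\]
Since every nonempty suffix $q'$ of $v'$ must now start with a $B$-letter, $v'\in B^*$ is forced, and then the first-$c$ factorization of $q'w'$ is the intended one, giving $q'w_2\in K_1$ and hence $v'w_2\in K_1\cU K_2$. The reverse inclusion is immediate because $q\in B^+$ for each nonempty suffix of $v\in B^*$. (The paper itself avoids this pitfall by writing $KL=B^*L\cap KcA^*$ and proving $KcA^*\in\cLTL(A^*)$ separately; the same subtlety lurks there and is handled by the same intersection with $BA^*$.) Everything else in your proposal --- including your~\itemref{cc2:Lloc} identity and your~\itemref{cc4:Lloc} formula, which is equivalent to the paper's --- checks out.
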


\begin{proof}
  The properties \itemref{aaa:loc} and \itemref{bbb:loc} are obvious. Let now $c \in A$ and $B = A \setminus \smallset{c}$. The language $B^+$ is defined by $\neg \F \neg B \in \LTL(A)$ and thus $B^+ \in  \cLTL(A^*)$. Since $\smallset{\varepsilon} = A^* \setminus \big( \bigcup_{a \in A} A^* \big)$, we have $B^* \in \cLTL(A^*)$.
  
  For \itemref{cc1:Lloc} let $K,L \in \cLTL(B^*)$. By induction we can assume $K,L \in \cLTL(A^*)$. This immediately yields $K \cup L, aL, K \cU L \in \cLTL(A^*)$ for all letters $a$. The set $B^* \setminus L$ can be written as $(A^* \setminus L) \cap B^*$ and hence $B^* \setminus L$ is in $\cLTL(A^*)$.
This shows $\cLTL(B^*) \subseteq \cLTL(A^*)$.

  For \itemref{cc2:Lloc} let $K \in \cLTL(A^*)$ and $L \in \cLTL(B^*)$. We have $K c L = A^* c L \cap K c B^*$ because the last $c$ in word is unique. Note that $A^* c L = A^* \cU cL \in \cLTL(A^*)$. It remains to show $KcB^* \in \cLTL(A^*)$ by structural induction: 
  \begin{align*}
  	(A^* \setminus L') cB^* &= A^* c B^* \setminus L' c B^*
	\\ (K' \cup L') c B^* &= K' a B^* \cup L' a B^*
	\\ (aL') c B^* &= a (L'cB*)
	\\ (K' \cU  L') c B^* &= (K'cB^*) \cU (L'cB^*).
  \end{align*} 

  For \itemref{cc3:Lloc} let $K \in \cLTL(B^*)$ and $L \in \cLTL(A^*)$ with $L \subseteq cA^*$. We have $KL = B^* L \cap K c A^*$. Note that $B^* L = BA^* \cU L \in \cLTL(A^*)$. As before, one can easily show $K c A^* \in \cLTL(A^*)$ by structural induction. For instance, $(B^* \setminus L') c A^* = B^* c A^* \setminus L' c A^*$ since $L' \subseteq B^*$ and the occurrence of the first $c$ is unique.

  For \itemref{cc4:Lloc} suppose $g : B^* \to T$ is a mapping
with $g^{-1}(t) \in \cLTL(B^*)$ for all $t \in T$. Moreover,
let $\sig: (cB^*)^* \to T^*$ be defined by 
$\sig(cu_1 \cdots cu_k) =  g(u_1) \cdots g(u_k)$ for 
$u_i \in B^*$. We show $\sig^{-1}(K) \in \cLTL(A^*)$ for every $K \in \cLTL(T^*)$ by structural induction on~$K$. For all $K,L \subseteq T^*$ and $t \in T$ we have:
  \begin{align*}
    \sig^{-1}(T^*) &= \smallset{\varepsilon} \cup cA^*
    \\ \sig^{-1}(K \setminus L) &= \sig^{-1}(K) \setminus \sig^{-1}(L)
    \\ \sig^{-1}(K \cup L) &= \sig^{-1}(K) \cup \sig^{-1}(L)
    \\ \sig^{-1}(t L) &= c \cdot g^{-1}(t) \cdot \sig^{-1}(L)
    \\ \sig^{-1}(K \cU L) &= \big((\sig^{-1}(K) \cup BA^*)\U \sig^{-1}(L)\big) \cap \sig^{-1}(T^*).
  \end{align*}
  Note that $g^{-1}(t) \cdot \sig^{-1}(L) \in \cLTL(A^*)$ by \itemref{cc3:Lloc}.
\end{proof}

Together with Theorem~\ref{thm:loc} this leads to the following result.

\begin{corollary}\label{cor:dennichbineincor}
  If $L \subseteq A^+$ is recognized by a finite aperiodic semigroup, then $L$ is definable in $\LTL(A)$.
\end{corollary}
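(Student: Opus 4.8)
The plan is to combine the two preceding propositions with the general localization theorem, so that almost no work remains. First I would recall \refprop{prp:cLTLloc}, which states that the language class $\cLTL$ is left-localizable, and hence in particular localizable. Applying \refthm{thm:loc} to $\cLTL$ immediately yields that every aperiodic language $L \subseteq A^*$ belongs to $\cLTL(A^*)$.

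Next I would invoke \refprop{prp:LTLcLTL}, which gives $L \in \cLTL(A^*)$ if and only if $L \setminus \smallset{\varepsilon}$ is definable in $\LTL(A)$. So from $L \in \cLTL(A^*)$ we conclude that $L \setminus \smallset{\varepsilon}$ is $\LTL(A)$-definable. The only gap to close is the passage from ``recognized by a finite aperiodic \emph{semigroup}'' in the corollary to ``recognized by a finite aperiodic \emph{monoid}'' in \refthm{thm:loc}: given $L \subseteq A^+$ recognized by a finite aperiodic semigroup $S$, adjoin an identity to obtain the monoid $S^1 = S \cup \smallset{1}$, which is still aperiodic and finite, and it recognizes $L$ as a subset of $A^*$ (the empty word maps to the adjoined $1$, which we simply exclude from the accepting set, so $\varepsilon \notin L$ causes no trouble). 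Thus $L$ is an aperiodic language over $A^*$ with $\varepsilon \notin L$, so $L = L \setminus \smallset{\varepsilon}$ is definable in $\LTL(A)$.

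I do not expect any real obstacle here: all the substance was already discharged in \refprop{prp:cLTLloc} (verifying the four localizability axioms for $\cLTL$, in particular the delicate condition \itemref{cc4:Lloc} involving the substitution $\sigma$) and in \refthm{thm:loc} (the local-divisor induction on $(\abs{M},\abs{A})$). The corollary is a purely formal corollary: chain $\cLTL$-localizability $\Rightarrow$ ``$\cLTL$ contains all aperiodic languages'' $\Rightarrow$ via \refprop{prp:LTLcLTL} ``every $\varepsilon$-free aperiodic language is $\LTL$-definable,'' and handle the semigroup-versus-monoid bookkeeping by adjoining an identity. If anything needs a sentence of care it is making explicit that $L \subseteq A^+$ forces $\varepsilon \notin L$, so that the ``$\setminus \smallset{\varepsilon}$'' in \refprop{prp:LTLcLTL} is vacuous and we genuinely obtain a formula defining $L$ itself.
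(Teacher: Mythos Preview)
Your proposal is correct and follows exactly the same route as the paper: use \refprop{prp:cLTLloc} together with \refthm{thm:loc} to get $L \in \cLTL(A^*)$, then apply \refprop{prp:LTLcLTL} and use $\varepsilon \notin L$. The paper handles the semigroup-versus-monoid issue in a single phrase (``as a subset of $A^*$, the language $L$ is recognized by a finite aperiodic monoid''), but your explicit $S^1$ construction is a perfectly fine elaboration of the same point.
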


\begin{proof}
  As a subset of $A^*$, the language $L$ is recognized by a finite aperiodic monoid. By Theorem~\ref{thm:loc} and Proposition~\ref{prp:cLTLloc} we have $L \in \cLTL(A^*)$. Since $\varepsilon \not\in L$, Proposition~\ref{prp:LTLcLTL} shows that $L$ is definable in $\LTL(A)$.
\end{proof}

\section{Bounded synchronization delay}\label{sec:sd}
A fundamental and classical result of \schuetz, published in~1965,
says that a language is star-free languages if and only if its syntactic monoid is finite and aperiodic~\cite{sch65sf}. 
There is another beautiful, but less known,  characterization of the star-free languages due to
\schuetz~\cite{schue75}, which seems to be quite overlooked. It
characterizes the star-free languages without using complementation, but the
inductive definition allows the star-operation on languages $K$
(already belonging to the class) if $K$ is a prefix code with bounded
synchronization delay. Since synchronization delay is the main feature
in this approach, the class is denoted by $\SD$.  The notion of
bounded synchronization delay was introduced by Golomb et al.~\cite{GolombG65ic,GolombGW58}. It is an important concept in coding theory.

A language $K \sse A^*$ is called \emph{prefix-free} if $u, uv \in K$
implies $u=uv$. A prefix-free language $K \sse A^+$ 
is also called a \emph{prefix code} since every word $u \in K^*$
admits a unique factorization $ u = u_1 \cdots u_k$ with $k \geq 0$
and $u_i \in K$. 
A prefix code $K$ has \emph{synchronization delay $d$} if for all $u,v, w \in A^*$ we have:
\begin{equation*}
  \text{if \,} uvw\in K^* \,\text{ and }\, v \in K^d 
  \text{, \,then }\, uv \in K^*.
\end{equation*}
Note that $uv \in K^*$ and $uvw \in K^*$ implies $w \in K^*$ since $K$ is a prefix code.
The prefix code $K$ has \emph{bounded synchronization delay} if there is some $d \in \N$ such that $K$ has synchronization delay $d$. Note that every subset
$B \sse A$ yields a prefix code
with synchronization delay $0$.  In particular, the sets $B$ are
prefix codes of bounded synchronization delay for all $B \sse A$.

The intuition behind this concept is the following: Assume a sender
emits a stream of code words from $K$, where $K$ is a prefix code with synchronization delay~$d$. If a receiver misses the beginning
of the message, he can wait until he detects a sequence of $d$ code
words. Then he can synchronize and decipher the remaining text after
these $d$ words.

The prefix code $K = \os{a,ba^d}$ over the two-letter alphabet $A = \os{a,b}$ has minimal synchronization delay $d$. To see that it does not have synchronization delay $d-1$, let $u=b$, $v=a^{d-1} \in K^{d-1}$, $w = a$. Then $uvw = ba^d \in K^*$ but $uv = ba^{d-1} \not\in K^*$. To see that it has synchronization delay $d$, let $uvw \in K^*$ and $v \in K^d$. If $v$ contains the letter $b$, then this occurrence corresponds to the code word $ba^d$ both in $v$ and $uvw$. Otherwise, $v = a^d$ and we have $uv \in K^*$ because every $b$ in $uv$ is followed by the factor $a^d$.

We now inductively define \schuetz{}'s language class $\SD$:
\begin{enumerate}
\item We have $\es \in \SD(A^*)$ and $\smallset{a} \in
  \SD(A^*)$ for all letters $a \in A$.
\item If $K,L \in \SD(A^*)$, then\; $K \cup L,\, K\cdot L\;\in
  \SD(A^*)$.
\item\label{sd:star} If $K \in \SD(A^*)$ is a prefix code 
  with bounded synchronization delay, then $K^* \in
  \SD(A^*)$.
\end{enumerate}
Note that, unlike the definition of star-free sets, the inductive
definition of $\SD(A^*)$ does not use any complementation. According to our definition, the empty set is a prefix code of delay $0$. Hence, $\es^*=\os{\eps} \in A^*$. 
The alternative is to replace the first item above, for example, by requiring that all finite subsets of $A^*$ belong to $\SD(A^*)$. 

\begin{proposition}\label{prp:SDloc}
  The language class $\SD$ is right-localizable.
\end{proposition}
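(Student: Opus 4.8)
The plan is to verify that $\SD$ satisfies the axioms of a right-localizable class, namely \itemref{aaa:loc}, \itemref{bbb:loc}, \itemref{cc1:Lloc}, \itemref{cc2:Rloc}, \itemref{cc3:Rloc} and \itemref{cc4:Rloc}. Conditions \itemref{aaa:loc} and \itemref{bbb:loc} are immediate from the defining clauses: $\es \in \SD(A^*)$ by definition, $A^* = (\bigcup_{a \in A} \smallset{a})^*$ is obtained by a star applied to the prefix code $A$ of synchronization delay $0$, and $\SD$ is closed under union and concatenation by clause~(2). Condition \itemref{cc1:Lloc} — that $\SD(B^*) \sse \SD(A^*)$ for $B = A \sm \smallset c$ — should follow by a straightforward structural induction, because every prefix code over $B$ that has bounded synchronization delay with respect to $B^*$ still has the same property with respect to $A^*$ (the extra letter $c$ only helps: if $uvw \in K^*$ with $K \sse B^+$ then $u,v,w \in B^*$ automatically). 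So far nothing uses $c$ in an essential way.

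The real work is in conditions \itemref{cc2:Rloc}, \itemref{cc3:Rloc} and \itemref{cc4:Rloc}. For \itemref{cc3:Rloc}, given $K \in \SD(A^*)$ with $K \sse A^*c$ and $L \in \SD(B^*)$, one wants $KL \in \SD(A^*)$; here the idea is that $KL$ factors uniquely ``at the last $c$'', since words in $K$ end in $c$ and words in $L$ contain no $c$ at all, so $KL$ behaves like an unambiguous concatenation and one can push the decomposition through the structure of $K$. For \itemref{cc2:Rloc} one has $K \in \SD(B^*)$, $L \in \SD(A^*)$ and wants $KcL \in \SD(A^*)$; again $KcL$ factors uniquely at the \emph{first} $c$, and one decomposes along the structure of $L$. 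The crux, and the step I expect to be the main obstacle, is \itemref{cc4:Rloc}: one is given $g : B^* \to T$ with all $g^{-1}(t) \in \SD(B^*)$, the substitution-like map $\sig : (B^*c)^* \to T^*$ with $\sig(u_1 c \cdots u_k c) = g(u_1) \cdots g(u_k)$, a language $K \in \SD(T^*)$, and must show $\sig^{-1}(K) \in \SD(A^*)$. The natural route is structural induction on $K$: the base case $K = \smallset t$ gives $\sig^{-1}(\smallset t) = g^{-1}(t)\,c \in \SD(A^*)$ using \itemref{cc3:Rloc}, unions and concatenations are handled by the fact that $\sig^{-1}$ commutes with $\cup$ and with the unambiguous concatenation coming from the block structure of $(B^*c)^*$, and the delicate case is the star: if $K = P^*$ for a prefix code $P \sse T^+$ with bounded synchronization delay, one must show that $\sig^{-1}(P)$ is again a prefix code over $A$ with bounded synchronization delay, and then that $\sig^{-1}(P^*) = (\sig^{-1}(P))^*$.

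To control the star case, the key structural observation is that $\sig$ is a bijection between $(B^*c)^*$ and $T^*$ after identifying each $T$-letter with a block $uc$ weighted by $g(u)$; concretely $\sig^{-1}(t_1 \cdots t_k) = g^{-1}(t_1)c \cdots g^{-1}(t_k)c$ as a pointwise product of block-languages. From this one sees that $\sig^{-1}(P)$ is prefix-free over $A$: a word $w \in \sig^{-1}(P)$ necessarily lies in $(B^*c)^+$, any proper prefix of $w$ in $(B^*c)^*$ corresponds to a proper prefix of the $T$-word $\sig(w)$ in $P$, and $P$ being prefix-free forbids this. A synchronization delay bound $d'$ for $\sig^{-1}(P)$ can be extracted from a bound $d$ for $P$ together with the fact that each letter of $A$ contributes to at most one $P$-block boundary, so one can take $d' = d$ (or $d+1$, depending on the off-by-one in the block counting): if $xyz \in (\sig^{-1}(P))^*$ and $y \in (\sig^{-1}(P))^{d'}$, then in $T$-coordinates $\sig(x)\sig(y)\sig(z) \in P^*$ and $\sig(y) \in P^{d}$, so $\sig(x)\sig(y) \in P^*$ by synchronization delay, whence $xy \in (\sig^{-1}(P))^*$. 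Finally $\sig^{-1}(P^*) = (\sig^{-1}(P))^*$ holds because $\sig$ is multiplicative on the block monoid, so $\sig^{-1}$ of a generated submonoid is generated by $\sig^{-1}$ of the generators. Putting this together, $\sig^{-1}(P^*) = (\sig^{-1}(P))^* \in \SD(A^*)$, which closes the induction. I would expect the bookkeeping around the exact synchronization-delay constant and the unambiguity of the block factorization of $(B^*c)^*$ to be the only genuinely fiddly points; everything else is routine closure-property chasing.
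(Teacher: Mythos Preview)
Your approach is essentially the paper's: dismiss \itemref{aaa:loc}, \itemref{bbb:loc}, \itemref{cc1:Lloc}, \itemref{cc2:Rloc}, \itemref{cc3:Rloc} as obvious and handle \itemref{cc4:Rloc} by structural induction on $K \in \SD(T^*)$, with the star case reduced to showing that $\sig^{-1}(P)$ is again a prefix code with bounded synchronization delay and that $\sig^{-1}(P^*) = \sig^{-1}(P)^*$.

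Two remarks. First, you are working too hard on \itemref{cc2:Rloc} and \itemref{cc3:Rloc}: $\SD$ is closed under \emph{arbitrary} concatenation by its defining clause, so once \itemref{cc1:Lloc} gives $\SD(B^*) \sse \SD(A^*)$, both $KcL$ and $KL$ are in $\SD(A^*)$ with no unambiguity argument needed. Second, your synchronization-delay sketch has exactly the off-by-one you suspect: in the definition of delay the words $x$ and $z$ range over all of $A^*$, so $\sig(x)$ need not be defined (for instance $x$ may end in the middle of a $B^*c$-block). The fix, as in the paper, is to take delay $d+1$: if $y \in \sig^{-1}(P)^{d+1}$, write $y = y_0 c\, y'$ with $y_0 c \in \sig^{-1}(P)$ and $y' \in \sig^{-1}(P)^d$; then $x y_0 c \in (B^*c)^*$, so $\sig(xy_0c)\,\sig(y')\,\sig(z) \in P^*$ with $\sig(y') \in P^d$, and the delay of $P$ gives $\sig(xy) \in P^*$, hence $xy \in \sig^{-1}(P)^*$. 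Also, $\sig$ is not a bijection (distinct $u,u' \in B^*$ with $g(u)=g(u')$ give $\sig(uc)=\sig(u'c)$); what you actually use is only that $\sig$ is a monoid homomorphism on $(B^*c)^*$, which suffices for $\sig^{-1}$ to commute with union, concatenation, and star.
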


\begin{proof}
  The properties \itemref{aaa:loc}, \itemref{bbb:loc}, \itemref{cc1:Lloc}, \itemref{cc2:Rloc}, and \itemref{cc3:Rloc}  are obvious. Let now $c \in A$ and $B = A \setminus \smallset{c}$ and consider property \itemref{cc4:Rloc}.

  Suppose $g : B^* \to T$ is a mapping
with $g^{-1}(t) \in \SD(B^*)$ for all $t \in T$. Moreover,
let $\sig: (B^*c)^* \to T^*$ be defined by 
$\sig(u_1 c \cdots u_k c) =  g(u_1) \cdots g(u_k)$ for 
$u_i \in B^*$. We show $\sig^{-1}(K) \in \SD(A^*)$ for every $K \in \SD(T^*)$ by structural induction on~$K$:
\begin{align*}
  \sigma^{-1}(t) &= g^{-1}(t) c
  \\ \sigma^{-1}(K \cup L) &= \sigma^{-1}(K) \cup \sigma^{-1}(L)
  \\ \sigma^{-1}(K \cdot L) &= \sigma^{-1}(K) \cdot \sigma^{-1}(L)
  \\ \sigma^{-1}(K^*) &= \sigma^{-1}(K)^*\,.
\end{align*}
It remains to verify 
that $\sigma^{-1}(K)$ is a prefix code of bounded
synchronization delay whenever $K$ has this property.  Clearly, $\varepsilon \notin
\sigma^{-1}(K)$.  To see prefix-freeness, consider $u,uv \in
\sigma^{-1}(K)$.  This implies $u \in A^*c$ and hence, $\sigma(uv) =
\sigma(u) \sigma(v)$. It follows that $\sig(v) = \varepsilon$ because $K$ is
prefix-free. This implies $v = \varepsilon$ since the length of $\sigma(v)$ is the number of $c$'s in $v \in (B^*c)^*$.

Finally, let $L = \sig^{-1}(K)$ and suppose $K$ has
synchronization delay $d$. We show that~$L$ has synchronization delay
$d+1$: Let $uvw \in L^*$ with $v \in L^{d+1}$. Write $v = u'cv'$ with
$v' \in L^d$. Note that $v' \in A^* c$.  It follows $\sigma(uv) =
\sigma(uu'c) \sigma(v')$ and $\sigma(v') \in K^d$. Thus, $\sigma(uv)
\in K^*$. We obtain $uv \in L^*$ as desired.
\end{proof}

\begin{corollary}\label{cor:APinSD}
  If $L \subseteq A^*$ is recognized by a finite aperiodic semigroup, then $L \in \SD(A^*)$.
\end{corollary}

\begin{proof}
  This is an immediate consequence of Proposition~\ref{prp:SDloc} and Theorem~\ref{thm:loc}.
\end{proof}

A language is \emph{star-free} it can be built from the finite languages by using concatenation and Boolean operations. One can think of the star-free languages as rational languages where the Kleene-star is replaced by complemention.

\begin{theorem}\label{thm:APSDSF}
  If $L \subseteq A^*$. Then the following statements are equivalent.
  \begin{enumerate}
\item\label{aaa:APSDSF} $L$ is recognized by a finite aperiodic semigroup.
\item\label{bbb:APSDSF} $L \in \SD(A^*)$.
\item\label{ccc:APSDSF} $L$ is star-free.
\end{enumerate}
\end{theorem}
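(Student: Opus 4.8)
The plan is to prove \itemref{aaa:APSDSF}, \itemref{bbb:APSDSF} and \itemref{ccc:APSDSF} equivalent via the cycle \itemref{ccc:APSDSF}~$\Rightarrow$~\itemref{aaa:APSDSF}~$\Rightarrow$~\itemref{bbb:APSDSF}~$\Rightarrow$~\itemref{ccc:APSDSF}. The equivalence \itemref{aaa:APSDSF}~$\Leftrightarrow$~\itemref{ccc:APSDSF} is Sch\"utz\-en\-ber\-ger's classical 1965 theorem~\cite{sch65sf:short}, so \itemref{ccc:APSDSF}~$\Rightarrow$~\itemref{aaa:APSDSF} I would simply cite; recall that this is the ``easy'' direction, where one checks that finite languages have finite aperiodic syntactic monoids and that the class of such languages is closed under Boolean operations and under concatenation. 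For \itemref{aaa:APSDSF}~$\Rightarrow$~\itemref{bbb:APSDSF} I would invoke the framework of \refsec{sec:loc}: by Proposition~\ref{prp:SDloc} the class $\SD$ is right-localizable and hence localizable, and a language $L\sse A^*$ recognized by a finite aperiodic semigroup is recognized by a finite aperiodic monoid (adjoin a unit to the semigroup; this preserves aperiodicity). Theorem~\ref{thm:loc} then immediately gives $L\in\SD(A^*)$. This is exactly where the local divisor technique carries the load.

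It remains to prove the ``easy'' implication \itemref{bbb:APSDSF}~$\Rightarrow$~\itemref{ccc:APSDSF} by structural induction along the inductive definition of $\SD$. The base languages $\es$ and $\os a$ are finite, hence star-free, and star-free languages are closed under union and concatenation by definition; so the only case requiring work is the Kleene star: if $K\in\SD(A^*)$ is star-free (by induction) and is a prefix code with synchronization delay~$d$, then $K^*$ is star-free. I would argue this algebraically. Pick a surjective morphism $h:A^*\to M$ onto a finite aperiodic monoid recognizing $K$; then $h$ also recognizes the prefix closure $\mathrm{Pref}(K)=h^{-1}\big(\set{p\in M}{pM\cap h(K)\neq\es}\big)$. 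Now build the deterministic greedy left-to-right parser for $K$ as a finite automaton whose states are ``empty buffer'', the value $h(r)$ of a nonempty proper prefix $r$ of a code word, and a ``failure'' state: reading a letter extends the current buffer; as soon as the buffer becomes a code word it is flushed and the state resets to ``empty buffer''; and a buffer that is no longer a prefix of any code word sends the parser to the absorbing failure state. This automaton accepts exactly $K^*$, so it suffices to show that its transition monoid is aperiodic; then $K^*$ is star-free by~\cite{sch65sf:short}.

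The hard part will be precisely this last step, and it is the point where bounded synchronization delay is indispensable. Aperiodicity of $M$ rules out any modular counting \emph{within} the construction of a single code word, but without a further hypothesis the parser might still count whole code words modulo some period --- exactly as it does for $K=\os{aa}$, where $K^*=(aa)^*$ is not star-free and, consistently, $\os{aa}$ has no bounded synchronization delay. The idea is that once the parser has emitted $d$ consecutive code words it has consumed a factor lying in $K^d$, and the defining property of synchronization delay~$d$ forces the code-word boundaries of that factor to be unambiguous, so the parser state reached after such a block is determined and cannot act as a counter. To make this rigorous I would fix a word $w$, analyse the eventually periodic sequence of parser states visited along $w,w^2,w^3,\dots$ --- recording, between two successive flushes, the position of the flush modulo $\abs w$ --- and show that aperiodicity of $M$ together with synchronization delay~$d$ forces this sequence to be eventually \emph{constant} rather than merely eventually periodic. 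This yields that the $n$-th and $(n+1)$-th powers of the transformation induced by $w$ agree for a bound $n$ independent of $w$, \ie the transition monoid is aperiodic, which completes the proof.
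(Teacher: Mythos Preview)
Your \itemref{aaa:APSDSF}$\Rightarrow$\itemref{bbb:APSDSF} is exactly the paper's argument. For \itemref{ccc:APSDSF}$\Rightarrow$\itemref{aaa:APSDSF} the paper actually gives a short self-contained proof (an explicit aperiodicity index $n(K)$ defined by structural recursion on star-free expressions) rather than citing~\cite{sch65sf:short}, but your choice to cite is harmless.

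The substantive divergence is \itemref{bbb:APSDSF}$\Rightarrow$\itemref{ccc:APSDSF}. The paper never passes through aperiodicity of $K^*$; it writes down a star-free expression for the complement directly. For any prefix code one has $A^*\setminus K^*=\bigcup_{i\ge 0}(K^iAA^*\setminus K^{i+1}A^*)$, and synchronization delay $d$ is precisely what collapses all terms with $i\ge d$ into the single summand $A^*K^d(AA^*\setminus KA^*)$, yielding
\begin{equation*}
  A^*\setminus K^* \;=\; A^*K^d(AA^*\setminus KA^*)\;\cup\;\bigcup_{0\le i<d}(K^iAA^*\setminus K^{i+1}A^*).
\end{equation*}
This takes two lines and, crucially, does \emph{not} invoke the hard direction of Sch\"utzenberger. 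One payoff is that the chain \itemref{aaa:APSDSF}$\Rightarrow$\itemref{bbb:APSDSF}$\Rightarrow$\itemref{ccc:APSDSF} then constitutes an independent local-divisor proof of ``aperiodic $\Rightarrow$ star-free''; your route instead consumes that theorem as a black box.

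Your parser plan can be pushed through, but the sketch does not yet isolate the correct use of delay. Delay $d$ does \emph{not} make the internal code-word boundaries of a $K^d$-block unique; what it guarantees is only that the \emph{right end} of any $K^d$-factor is a $K^*$-boundary in every ambient $K^*$-factorization. The clean way to exploit this is: once the run of the parser from the empty buffer has flushed $d$ times after reading some prefix $x$ of $w^n$, one has $x\in K^d$; for any other initial buffer $r$ whose run on $w^n$ does not fail, $rw^n$ extends to a word in $K^*$, and delay applied to the factor $x$ of that word forces $rx\in K^*$, so \emph{that} run is also in the empty state after $x$. Hence all non-failing runs agree from $x$ onward, which (together with aperiodicity of $M$ for the ``only finitely many flushes'' case) gives $w^n\equiv w^{n+1}$ in the transition monoid for large $n$. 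So your approach is salvageable, but it is a genuine detour compared with the one-line formula above.
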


\begin{proof}
\itemref{aaa:APSDSF}~implies~\itemref{bbb:APSDSF}: This is an immediate consequence of Proposition~\ref{prp:SDloc} and Theorem~\ref{thm:loc}.

\itemref{bbb:APSDSF}~implies~\itemref{ccc:APSDSF}:
It suffices to shows that $K^*$ is star-free if $K$ is a star-free prefix code with bounded synchronization delay.
As $K$ is a prefix code, we can write $A^*\setminus K^*$ as
  an infinite union:
  \begin{equation}\label{eq:code}
    A^*\setminus K^* = \bigcup_{i \geq 0} 
    \left( K^iA A^* \sm  K^{i+1} A^* \right).
  \end{equation}
  Now, let $d$ be the synchronization delay of $K$. Then we can write
  \begin{equation*}
    A^* \setminus K^* = A^* K^d(A A^* \sm  K A^*) 
    \cup\bigcup_{0\leq i < d} (K^iA A^*\sm  K^{i+1} A^*).
  \end{equation*}
  The inclusion from left to right follows from \refeq{eq:code}. The
  other inclusion holds since the intersection of $K^*$ and $A^*
  K^d(A A^* \sm K A^*)$ is empty. This is obtained by using the
  definition of synchronization delay.
  
\itemref{ccc:APSDSF}~implies~\itemref{aaa:APSDSF}: This is verified by showing that the syntactic monoid of every star-free language is aperiodic, see e.g.~\cite{pin86} for definitions and basic properties of syntactic monoids. 
 
 We prove the following claim.
 For every language star-free language $K$ there exists an integer $n(K)
  \in \N$ such that for all words $p,q,u,v \in A^*$ we have
  \begin{equation*}
    p\, u^{n(K)}q \in K \ \Leftrightarrow \ p\, u^{n(K)+1}q \in K.
  \end{equation*}

  For the languages $A^*$ and $\smallset{a}$ with $a \in A$ we define $n(A^*) = 0$ and $n(\smallset{a}) = 2$.
  Let now $K,K'$ be star-free such that $n(K)$ and $n(K')$ exist.
  We set
  \begin{gather*}
    n(K \cup K') =  n(K \setminus K') = \max \bigl( n(K), n(K') \bigr), \\
    n(K \cdot K') = n(K) + n(K') + 1.
  \end{gather*}
  The correctness of the first two choices is straightforward. For the last equation,
  suppose $p\, u^{n(K)+n(K')+2}q \in K \cdot K'$. Then either $p\,
  u^{n(K)+1} q' \in K$ for some prefix $q'$ of $u^{n(K')+1} q$ or
  $p'\, u^{n(K')+1} q \in K'$ for some suffix $p'$ of $p u^{n(K) +
    1}$. By definition of $n(K)$ and $n(K')$ we have $p\, u^{n(K)} q' \in
  K$ or $p'\, u^{n(K')} q \in K'$, respectively. Thus $p\,
  u^{n(K)+n(K')+1}q \in K \cdot K'$. The other direction is similar:
  If $p\, u^{n(K)+n(K')+1}q \in K \cdot K'$, then $p\,
  u^{n(K)+n(K')+2}q \in K \cdot K'$.
\end{proof}

\schuetz~\cite{Schutzenberger1974b} generalized the concept of bounded synchronization further beyond aperiodic languages. In \cite{DiekertWalter16}, this was generalized even further and led to the notion of a \emph{group-controlled star-operator}. Another extension of \cite{sch65sf} 
is in \cite{PlaceZeitounICALP2019}. 

\subsection*{Block codes}
To give some background on the concept of bounded synchronization delay, 
we conclude this section by some examples and remarks on block codes.
A \emph{block code} is a nonempty code in which all code words have the same length. In particular, it is a prefix code. 
Block codes,
where no concatenation of two code words contains a proper infix of a code word, have been studied for more than been sixty years \cite{CrickGO1957}, and they are traditionally called \emph{comma-free}. Formally, a (possibly variable-length code $K\sse A^*$ is comma-free if for all $u,v\in K$ and $p,q,w\in A^+$ the equation $uv=pwq$ implies $w\notin K$. In our notation, a comma-free code is a code with bounded synchronization delay~$1$.

The original motivation  was related to biology, and the connection to genetics is still of interest, see for example the preprint 
\cite{FimmelMPSS2019} and the references therein.

The following example shows that there are block codes with arbitrarily high synchronization delay. 

\begin{example}\label{ex:fred}
Let
\begin{align*}
  A &= \os{a_1,\ldots,a_{d}} \cup \os{b_1, \ldots, b_{d}},
  \\ K &= \set{a_i b_{i}}{1 \leq i \leq d} \cup \set{b_i a_{i+1}}{1 \leq i < d}.
\end{align*}
All words in $K$ have length $2$. We first show that $K$ does not have synchronization delay $d-1$: Let $u = a_1$, $v= (b_1 a_2)(b_2 a_3) \cdots (b_{d-1} a_{d}) \in K^{d-1}$, and $w=b_d$. Then $uvw = (a_1 b_1) \cdots (a_d b_d) \in K^*$ but $uv \not\in K^*$.

Next, we show that $K$ has synchronization delay $d$: Suppose that $uvw \in K^*$ and $v \in K^d$.
If $b_d$ occurs in $v$, then we can synchronize at the factor $a_d b_d$ because there is only one word in $K$ which contains the letter $b_d$.

Otherwise, some letter $b_i$ occurs at least twice in $v$. In particular, there exists a factor $b_i v' b_j$ with $i \geq j$ in $v$. We can choose this factor such that the length of $v'$ is minimal. In particular, there is no letter $b_k$ in $v'$. If $b_i v'$ ends with $a_j$, then it cannot end with $b_{j-1} a_j$ because $i > j-1$. Therefore, this occurrence of $a_j$ corresponds to the code word $a_j b_j \in K$ both in $v$ and $uvw$. 

If $v'$ does not end with $a_j$, then this occurrence of $b_j$ (in the factor  $b_i v' b_j$) needs to be followed by $a_{j+1}$ in $v$. In particular, we can synchronize at $b_j a_{j+1}$.

This shows that $d$ is the minimal synchronization delay of $K$.

It is easy to generalize this example to longer block codes by adding two letters $c,d$ to the alphabet $A$. For $k\geq 2$ and $W = \os{c,d}^{k-2}$ the code
\begin{align*}
  K' &= \set{a_i b_{i} W}{1 \leq i \leq d} \cup \set{b_i W a_{i+1}}{1 \leq i < d}
\end{align*}
has minimal synchronization delay $d$ and all words in $K'$ have length $k$.
\end{example}

\begin{remark}
  If a block code $K \subseteq A^k$ has bounded synchronization delay, then it only contains primitive words and there cannot be any two conjugated words in $K$. Golomb and Gorden have shown that choosing the lexicographically minimal word in every conjugacy class of primitive words yields a block code with bounded synchronization delay~\cite{GolombG65ic}; in particular, this achieves the maximal number of words in the block code $K$ (for fixed alphabet $A$ and length $k$) such that $K$ has bounded synchronization delay.
Example~\ref{ex:fred} with the order $a_1 < b_1 < a_2 < b_2 < \ldots < a_d < b_d < c < d$ yields a lower bound of $d$ on the minimal synchronization delay of any such block code over at least $2d+1$ letters (and block length at least $2$; note that we could use any non-empty alphabet instead of $\os{c,d}$). On the other hand, Eastman~\cite{Eastman1965ieee} showed that for every odd number $k$, we can choose one word in every conjugacy class of primitive words in $A^k$ such that the resulting block code is comma-free; in other words, for odd $k$, synchronization delay $1$ can be achieved with the maximal number of words.
  
  For $A = \os{a,b}$ and $k=5$, the construction of Golomb and Gordon yields $K = \os{aaaab, aaabb, aabab, aabbb, ababb, abbbb}$. This code does not have synchronization delay $1$, e.g., witnessed by $u=aa$, $v=aabab \in K$, $w = bbb$. On the other hand, the code $K' = \os{abaaa, ababa, ababb, abbaa, abbba, abbbb}$ given by Scholtz~\cite{Scholtz1969ieee} has synchronization delay $1$.
\end{remark}

\section{Church-Rosser congruential languages}\label{sec:crcl}
  
\emph{Word Problem} WP$(L)$ of a language $L \subseteq A^*$ is following computational task. 
\begin{center}
 \textbf{ Input:} $ w \in A^*$.  \textbf{ Question:} Do we have $ w \in L$?
\end{center}
The following facts are standard in formal language theory. 
 \begin{itemize}
\item If $L$ is regular, then  WP$(L)$ is decidable in  real time.
\item If $L$ is deterministic context-free then WP$(L)$ is solvable in linear time.
\item If $L$ is context-free then WP$(L)$ is solvable in less than cubic time.
\item If $L$ is context-sensitive then WP$(L)$ is solvable in polynomial space, and there are context-sensitive languages such that 
WP$(L)$ is \PSPACE{} complete.
\end{itemize}
The paper of McNaughton, Narendran, and Otto \cite{McNaughtonNO88} exploits  the  following theme:  ``Go beyond  deterministic context-free and keep linear time soluability for the word problem by using Church-Rosser semi-Thue systems.''

Before we proceed we need more preliminaries and notation. An element of the alphabet $A$ is
called a {\em letter}. A {\em word} is an element of  $A^*$. The empty
word is denoted by 1. The {\em length} of a word $u$ is denoted by
$\abs u$. We have $\abs u = n$ for $u= a_1 \cdots a_n$ where $a_i \in
A$. The empty word has length $0$.  We carefully distinguish between
the notion of factor and subword.  Let $u,v \in A^*$. The word $u$ is
called a {\em factor} of $v$ if there is a factorization $v= xuy$. It
is called a {\em subword} of $v$ if there is a factorization $v=
x_0u_1x_1 \cdots u_kx_k$ such that $u= u_1 \cdots u_k$.  A subword is
also sometimes called a {\em scattered subword} in the literature.
A \emph{weight} means here a \homo
$\Abs{\cdot}: A^* \to \N$ such that $\Abs a  >0 $ for all letters $a \in A$. 
 The length function is a weight. If the weight $\Abs{\cdot}$ is given we say that $(A,\Abs{\cdot})$ is a \emph{weighted alphabet}. 
 
A {\em semi-Thue system} over $A$ is a subset $S\sse A^*\times
A^*$. The elements are called {\em rules}. We frequently write
$\ell \ra{} r$ for rules $(\ell,r)$.  A system $S$ is called
{\em \lr} (resp.{}\emph{ \wred} for a weight $\Abs{\cdot}$) if we have $\abs \ell > \abs r $ 
(resp.{}$\Abs\ell > \Abs r $) for all rules $(\ell,r)
\in S$. It is called {\em \swr,} if $r$ is a subword of $\ell$ and
$\ell \neq r$ for all rules $(\ell,r) \in S$. Every \swr system is
\lr and \wred for all weights. 
Every system $S$ defines the rewriting relation ${\RA S} \sse A^*
\times A^*$ by
\begin{align*}
  u \RA S v \;\text{ if } \; u=p\ell q, \; v= prq \; \text{ for some
    rule } \; (\ell,r) \in S.
\end{align*}

By $\RAS*{S}$ we mean the reflexive and transitive closure of
$\RA{S}$.  By $\DAS*{S}$ we mean the symmetric, reflexive, and
transitive closure of $\RA{S}$. We also write $u \LAS*{S}v$ whenever
$v\RAS*{S}u$.  The system $S$ is {\em confluent} if for all
$u\DAS*{S}v$ there is some $w$ such that $u\RAS*{S}w\LAS*{S}v$.
By $\IRR(S)$ we denote the set of irreducible words, \ie the set of
words where no left-hand side occurs as any factor.  The relation
${\DAS * S} \sse A^* \times A^*$ is a congruence, hence the congruence
classes $[u]_S = \{v \in A^*\mid u \DAS*S v\}$ form a monoid which is
denoted by $A^*/\DAS*S$ or simply by  $A^*/S$. 

\begin{definition} 
  A semi-Thue system $S\sse A^* \times A^*$ is called a {\em Church-Rosser system} if it
  is \lr and confluent.  A language $L \sse A^*$ is called a {\em
    Church-Rosser congruential language} if there is a finite
  Church-Rosser system $S$ such that $L$ can be written as a finite
  union of congruence classes $[u]_S$. If, in addition,
  $A^*/S$ itself is a finite monoid then $L \sse A^*$ is called a {\em
    strongly Church-Rosser congruential}. If $A^*/S$ is finite then 
    we say that $S$ is of  \emph{finite index}. 
 \end{definition}

The motivation to consider these languages in \cite{McNaughtonNO88} stems from the following. 

\begin{remark}\label{rem:basic}
  Let $S\sse A^* \times A^*$ be a \wred system. Then on input 
  $w \in A^*$ of length $n$ we can compute in time $\Oh(n)$ 
  some word $\wh w \in \IRR(S)$ such that $w \RAS*S \wh w$. 
  In particular, if $L$ is a Church-Rosser congruential language
  then its word problem is solvable in linear time. 
\end{remark}
 
 Let us consider some examples. 
 \begin{itemize}
 \item Let $S= \os{aab \to ba, cb \to c}$. It is \CR, and hence 
 $L_0= [ca]_S$ is \CR congruential.  The language $L_0$ is not 
 context-free since $L_0 \cap ca^*b^* = \set{ca^{2^n}b^n}{n \geq 0}$. 
 Therefore the class of \CR congruential languages is not included in the class of context-free languages. 
  \item Let $L_1 = \set{a^nb^n}{n \geq 0}$. It is Church-Rosser congruential due to $S = \oneset{aabb \to ab}$ and 
 $L_1 = [ab]_{S} \cup
      [{1}]_{S}$. The monoid 
 $A^* / {S}$ is infinite because $L$ is not regular. 
 We may also note that  $[a^n]_{S} = \os {a^n}$ for $n \geq 1$.
 Hence there are infinitely many classes. 
   \item Let $L_2 = \set{a^mb^n}{m \geq n \geq 0}$, it 
    is  deterministic context-free, but not Church-Rosser congruential since
      $a^m$ must be irreducible for each  $m \geq 1$.
           \item Let $L_3 = \smallset{a,b}^* a \smallset{a,b}^*$.
    It is strongly Church-Rosser congruential due to 
    $S = \oneset{aa \to a,\ b \to {1}}$, $L_3 = [a]_{S}$, and 
    $A^* / {S} = \os{[1]_{S}, [a]_{S}}$. 
     \item Let $L_4 = (ab)^*$ and 
    \ $S = \oneset{aba \to a}$. The system $S$ is \CR and 
   $L_4 = [ab]_{S} \cup [{1}]_{S}$. However, 
   $A^* / {S}$ is infinite although $L_4$ is regular. Therefore $S$ does not show that $L_4$ is strongly \CR congruential. 
   However, choosing  
   $T = \{aaa \to aa,\ aab \to aa,\ baa \to aa,
      bbb \to aa,\ bba \to aa,\ abb \to aa, aba \to a,\ bab \to b\}$, we obtain  
 $L_4 = [ab]_{T} \cup [{1}]_{T}$ and  $A^* / {T}$ has $7$ elements, only.   Hence, $L_4$ is indeed strongly \CR congruential. \eex 
\end{itemize}

The languages $L_0$ and $L_2$ show that the 
      classes of (deterministic) context-free languages and 
      \CR congruential languages are incomparable. Therefore 
      in \cite{McNaughtonNO88} a weaker  notion of \CR  languages
has been considered, too. The new class contained all \CR congruential languages as well as all deterministic context-free languages, still their word problems are solvable in linear time using \CR semi-Thue systems. We do not go into details, but focus on the following  conjecture dating back to 1988 and its solution in 2012.

\begin{conjecture}[\cite{McNaughtonNO88}]\label{con:MNO88}
Every regular language is \CR congruential.
\end{conjecture}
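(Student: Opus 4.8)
The plan is to prove the following strengthening by induction, from which \refcon{con:MNO88} follows immediately: \emph{for every homomorphism $h\colon A^*\to M$ onto a finite monoid and every weight $\Abs{\cdot}$ on $A^*$ there is a finite confluent $\Abs{\cdot}$-reducing semi-Thue system $S\sse A^*\times A^*$ such that $u\DAS*Sv$ implies $h(u)=h(v)$, and such that the monoid $A^*/S$ is finite.} Applying this with $\Abs{\cdot}$ the length function and $h$ the syntactic morphism of a regular language $L$ gives a finite confluent \emph{length}-reducing system whose congruence refines the syntactic congruence and has finite index; since $A^*/S$ is finite, $L$ is a finite union of classes $[u]_S$, so $L$ is Church-Rosser congruential (in fact strongly so), with a linear-time word problem by \refrem{rem:basic}. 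The induction is on $(\abs M,\abs A)$ with the lexicographic order, and throughout we may assume $h$ surjective. If $\abs M=1$ we take $S=\set{a\to\e}{a\in A}$.

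The decisive base case is that of a nontrivial finite group $M$, i.e.\ every regular group language is strongly Church-Rosser congruential. This is the technically hard ingredient of the whole result; it is established by a separate argument (the ``very difficult and technical'' part announced in the introduction) and we take it as given here. For the inductive step, assume $M$ is \emph{not} a group. Then some letter $c\in A$ has $h(c)$ not a unit of $M$, for otherwise $h(A)$ would lie in the group of units of $M$, forcing $M$ to be a group. Put $B=A\sm\os c$. Applying the induction hypothesis to $g\colon B^*\to h(B^*)$ — a lexicographically smaller pair — and to $\Abs{\cdot}$ restricted to $B^*$, we get a finite confluent $\Abs{\cdot}$-reducing system $S_B$ over $B^*$ with $\DAS*{S_B}\sse\ker g$ and $B^*/S_B$ finite; write $v_t$ for the unique $S_B$-irreducible word in a class $t\in B^*/S_B$. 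Now use the local divisor $M_c=(h(c)M\cap Mh(c),\circ,h(c))$, which satisfies $\abs{M_c}<\abs M$ by \refprop{prop:semld} (since $h(c)$ is not a unit). Treating the finite set $T=B^*/S_B$ as an alphabet, define the homomorphism $f\colon T^*\to M_c$ on letters by $f(t)=h(cv_tc)$; this is well defined because $h(cv_tc)=h(c)\,g(v_t)\,h(c)$ depends only on $t$, and unwinding $\circ$ gives $f(t_1\cdots t_k)=h(cv_{t_1}cv_{t_2}c\cdots cv_{t_k}c)$. Since $\abs{f(T^*)}\le\abs{M_c}<\abs M$, the induction hypothesis applied to $f\colon T^*\to f(T^*)$ with the weight $\Abs{t}_T\coloneq\Abs{cv_t}$ on $T$ yields a finite confluent $\Abs{\cdot}_T$-reducing system $S_T$ over $T^*$ with $\DAS*{S_T}\sse\ker f$ and $T^*/S_T$ finite.

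The output system is
\[
  S\;=\;S_B\;\cup\;\set{cv_{\ell_1}c\cdots cv_{\ell_p}c\;\to\;cv_{r_1}c\cdots cv_{r_q}c}{(\ell_1\cdots\ell_p\to r_1\cdots r_q)\in S_T},
\]
where every left-hand side of the second family both starts and ends with $c$. Three points are to be verified, each short given the set-up. (i) $S$ is $\Abs{\cdot}$-reducing: for $S_B$-rules this is built in, and for an encoded rule one computes $\Abs{cv_{\ell_1}c\cdots cv_{\ell_p}c}=\Abs c+\sum_i\Abs{\ell_i}_T$, which strictly exceeds $\Abs c+\sum_j\Abs{r_j}_T$ because $S_T$ is $\Abs{\cdot}_T$-reducing. (ii) Every rule preserves $h$: for $S_B$ this is $\DAS*{S_B}\sse\ker g$; for an encoded rule, $h(cv_{\ell_1}c\cdots cv_{\ell_p}c)=f(\ell_1\cdots\ell_p)=f(r_1\cdots r_q)=h(cv_{r_1}c\cdots cv_{r_q}c)$ using $\DAS*{S_T}\sse\ker f$; hence $u\DAS*Sv$ implies $h(u)=h(v)$. (iii) $S$ is confluent: being weight-reducing it terminates, so by Newman's lemma it suffices to resolve the critical pairs. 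There are none between $S_B$ and the encoded rules, because a left-hand side of $S_B$ lies in $B^*$, cannot straddle an occurrence of $c$, and cannot sit inside any block $v_{\ell_i}$ (those are $S_B$-irreducible); an overlap of two encoded left-hand sides forces the $c$-blocks to align (as $c\notin B$) and hence, via the encoding, descends from an overlap in $T^*$, so it resolves by confluence of $S_T$; and critical pairs within $S_B$ resolve by its confluence. Finally $A^*/S$ is finite: an $S$-irreducible word is either an $S_B$-irreducible word in $B^*$ or has the form $\wh w_0\,cv_{\rho_1}c\cdots cv_{\rho_l}c\,\wh w_1$ with $\wh w_0,\wh w_1$ $S_B$-irreducible and $\rho_1\cdots\rho_l$ an $S_T$-irreducible word in $T^*$, and $B^*/S_B$, $T^*/S_T$ are both finite. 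This closes the induction.

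The real obstacle is the base case: producing, for a regular \emph{group} language, a finite confluent length-reducing system of finite index — that is where all the difficulty sits. The inductive step is just the lightweight local-divisor bookkeeping; its only delicate points are the choice of the weight $\Abs{\cdot}_T=\Abs{c\,\cdot\,}$ on $T$ (so that the encoded rules stay $\Abs{\cdot}$-reducing) and the observation that the separating letter $c$ never occurs in $B$, which is exactly what kills the dangerous critical pairs.
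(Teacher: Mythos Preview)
Your proposal is correct and follows essentially the same route as the paper: both black-box the group case (the paper cites \cite{DiekertKRW12icalp} for it, you explicitly flag it as the hard ingredient) and carry out the identical local-divisor induction on $(\abs{M},\abs{A})$, with your alphabet $T=B^*/S_B$ being canonically the same as the paper's prefix code $K=\IRR_R(B^*)\,c$ and your encoded rules matching the paper's $T'=\set{c\ell\to cr}{\ell\to r\in T}$. Your write-up is in fact a bit more explicit on the confluence check (critical-pair analysis versus the paper's one-line citation) and on why the chosen weight on $T$ makes the encoded rules $\Abs{\cdot}$-reducing.
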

After some significant initial
progress on this conjecture in~\cite{Narendran84phd,NiemannPhD02,NiemannO05,NiemannW02,reinhardtT03}
there was stagnation.
 Before 2011 the most advanced result was 
 a full solution when the monoid $M$ equals $\Z/2 \Z$ \cite{NiemannW02}. More general, it was  announced in 2003 by
Reinhardt and Th\'erien in \cite{reinhardtT03} that 
\refcon{con:MNO88} is true for all regular languages where the
syntactic monoid is a group. However, the manuscript has never been
published as a refereed paper and there are some flaws in its
presentation.
 
Let us continue with some examples which show that 
this statement is far from being trivial even for finite cyclic groups. It shows that a main difficulty is in the number of generators, because the  
corresponding the monoid $M$ is always equal the cyclic group $\Z / 3 \Z$ which is the next step beyond $\Z/2 \Z$.
 \begin{itemize}
\item Let $L_5 = \set{w \in a^*}{\abs{w} \equiv 0 \bmod 3}$, 
then $S = \smallset{aaa \to {1}}$ shows that $L_5$ is strongly \CR congruential.
  \item Let $L_6 = \set{w \in \smallset{a,b}^*}{\abs{w} \equiv 0 \bmod 3}$. We have $L_6 = [1]_S$ \wrt~to the system
    $S = \set{u \to {1}}{\abs{u} = 3}$. But 
    $S$ is not confluent, as we can see from $a
      \;{\LA{S}}\; aabb \;{\RA{S}}\; b$. The smallest system (we are aware of) 
      showing that $L_6$ is \CR congruential is rather large. 
      We may choose 
$T = \{aaa \to 1,\ baab \to b,\  (ba)^3b \to b\}\cup\;\{bb\,u\,bb \to b^{\abs{u}+1}\mid 1 \leq
        \abs{u} \leq 3\}$. The language $L_6$ is a union of elements in $A^* / T$ and $A^*/ T$ contains $272$ elements,  
      longest irreducible word has length $16$.
   \item Finally, consider $L_7 = \set{w \in \smallset{a,b,c}^*}{\abs{w} \equiv 0 \bmod
      3}$. We know by \cite{DiekertKRW15jacm} that $L_7$ is 
      strongly \CR congruential, because it is regular, but we failed 
      to construct the corresponding \CR system within a reasonable 
      amount of time. \eex  
      \end{itemize}
The increasing difficulty in finding \CR systems for the languages 
$L_5$, $L_6$, $L_7$ is 
 in the increasing number of letters, and  to cope with commutative structures. Somehow 
    commutativity plays against \CR systems. 

The solution of \refcon{con:MNO88} is a typical example for the principal of \emph{loading induction}: proving a more statement 
is sometimes easier because a stronger inductive assumptions can be used.
\refcon{con:MNO88} speaks about \CR languages. First we replace the 
statement by an ``\IFF'' condition. We will show that 
a language is regular \IFF it is strongly \CR congruential.
Next, and this is crucial, we replace the existence of a 
finite \CR system by starting with an arbitrary weighted alphabet 
$(A,\Abs{\cdot})$ and we consider omly
finite confluent system $S\sse A^*\times A^*$ of finite index 
which are \wred for the given weight. 
Finally we switch to a purely  
algebraic statement.  Thus, 
instead of proving 
\refcon{con:MNO88} we actually proved the following  result
about \homo{}s to finite monoids.
\begin{theorem}\label{thm:main}
  Let $(A,\Abs{\cdot})$ be a weighted alphabet and let $\varphi : A^*
  \to M$ be a homomorphism to a finite monoid $M$. Then there exists a
  finite  confluent system $S\sse A^*\times A^*$ of finite index such that $\Abs \ell > \Abs r$ for all $\ell \to r \in S$ and such that 
  $\varphi$ factorizes through $S$.
\end{theorem}

\begin{corollary}\label{cor:main}
  A language $L \subseteq A^*$ is regular if and only if there exists
  a Church-Rosser system $S$ of finite index such that $L =
  \bigcup_{u \in L} [u]_S$. In particular, 
  all regular languages are strongly
Church-Rosser congruential.
\end{corollary}

The proof of \refthm{thm:main} is split in two main 
parts. First, the case where $\phi(A^*)$ is a group has to be solved. 
Below we show the solution for finite 
non-cyclic simple groups, the general case where $G$ is a group is much harder, and we do not go into details here, but refer to \cite{DiekertKRW15jacm}.
In Proposition~\ref{prp:main} we show how  the local divisor technique enables us to lift the result for groups to arbitrary monoids. These two parts are taken almost verbatim from \cite{DiekertKRW15jacm}.
We use the following convention. 
Let $(A,\Abs{\cdot})$ be a weighted alphabet and 
$S\sse A^*\times A^*$  be a finite confluent semi-Thue system 
of finite index. We then say that $S$ is a  \emph{weighted Church-Rosser
system} if in addition $\Abs \ell > \Abs r$ for all $\ell \to r \in S$.

The existence of weighted Church-Rosser
systems is easy to show for finite (non-cyclic) simple groups. Let $\phi: A^* \to G$ be a homomorphism to a finite
group, where $(A,\Abs{\cdot})$ is a weighted alphabet and  $L_G = \set{w \in A^*}{\phi(w) = 1}$. Let us assume
that the greatest common divisor $\gcd{\set{\Abs{w}}{w \in L_G}}$ is
equal to one; e.g.{} $\oneset{6,10,15} \sse \set{\Abs{w}}{w \in
  L_G}$. Then there are two words $u,v \in L_G$ such that $\Abs{u}
-\Abs{v} = 1$. Now we can use these words to find a constant $d$ such
that all $g\in G$ have a representing word $v_g$ with the exact weight
$\Abs{v_g} = d$. To see this, start with some arbitrary set of
representing words $v_g$. We multiply words $v_g$ with smaller weight
with $u$ and words $v_g$ with larger weights with $v$ until all weights are
equal.

The final step is to define the following weight-reducing system
\begin{equation*}
  S_G= \set{w \to v_{\phi(w)}} {w \in A^{*}\;  \text{ and } 
    d < \Abs w \leq d + \max\set{\Abs a}{a \in A}}.
\end{equation*}
Confluence of $S_G$ is trivial; and every language recognized by
$\phi$ is also recognized by the canonical \homo $A^* \to A^* / S_G$.

Now assume that we are not so lucky, i.e., $\gcd{\set{\Abs{w}}{w \in
    L_G}} >1$.  This means there is a prime number $p$ such that $p$
divides $\Abs{w}$ for all $w \in L_G$. Then, the \homo of $A^*$ to
$\Z/p\Z$ defined by $a \mapsto \Abs{a} \bmod p$ factorizes through
$\phi$ and $\Z/p\Z$ becomes a quotient group of $G$.  This can never
happen if $\phi(A^*)$ is a simple and non-cyclic subgroup of $G$,
because a simple group does not have any proper quotient group. But
there are many other cases where a natural homomorphism $A^* \to G$
for some weighted alphabet $(A,\Abs{\cdot})$ satisfies the property
$\gcd{\set{\Abs{w}}{w \in L_G}= 1}$ although $G$ has a non-trivial
cyclic quotient group.  Just consider the length function and a
presentation by standard generators for dihedral groups $D_{2n}$ or
the permutation groups $\mathcal{S}_n$ where $n$ is odd.

For example, let $G = D_6 = \mathcal{S}_3$ be the permutation group of
a triangle.  Then~$G$ is generated by elements $\tau$ and $\rho$ with
defining relations
$\tau^2 = \rho^3 = 1 \text { and } \tau \rho \tau = \rho^2$.
The following six words of length $3$ represent all six group
elements: 
\begin{equation*}
  1=\rho^3,\; \rho=\rho\tau^2,\; \rho^2 = \tau \rho \tau,\; \tau=\tau^3,\;
  \tau\rho = \rho^2\tau,\;  \tau\rho^2.
%
\end{equation*}
The corresponding monoid $\smallset{\rho,\tau}^* / S_G$ has $15$
elements. More systematically, one could obtain a normal form of
length $5$ for each of the group elements in
$\oneset{1,\rho,\rho^2,\tau,\tau\rho,\tau\rho^2}$ by adding factors
$\rho^3$ and $\tau^2$. For example, this could lead to the set of
normal forms
$\oneset{\tau^2\rho^3,\tau^4\rho,\rho^5,\tau^5,\tau\rho^4,\tau^3\rho^2}$.

\begin{proposition}[\cite{DiekertKW12tcs},\cite{DiekertKRW15jacm}]\label{prp:main}
Let $(A,\Abs{\cdot})$ be a weighted alphabet and let $\varphi : A^*
  \to M$ be a homomorphism to a finite monoid $M$.
  Assume that for all weighted alphabets  $(B,\Abs{\cdot}_B)$ and for all $\psi : B^*
  \to M'$ where $M'$ is a divisor of $M$, there exists some
\CR system $T\sse B^*\times B^*$ such that
  $\psi$ factorizes through $T$ whenever either $\abs {B} < \abs A$ and 
  $M' = M$ or $\abs {M'} < \abs M$ (or both).  
  Then there exists 
   some \CR system $S\sse A^*\times A^*$ of finite index such that
  $\phi$ factorizes through $S$.
\end{proposition}

\begin{proof} 
If $\varphi(A^*)$ is a finite non-cyclic group, then the claim is shown above. For other  groups we refer to  
  \cite{DiekertKRW15jacm}. If $\varphi(A^*)$ is not a group, then there
  exists $c \in A$ such that $\varphi(c)$ is not a unit.  Let $B = A
  \setminus \smallset{c}$. 
  By hypothesis 
  there exists a weighted Church-Rosser system $R$ for the
  restriction $\varphi : B^* \to M$ satisfying the statement of the
  theorem. Let
  \begin{equation*}
    K = \IRR_R(B^*) c.
  \end{equation*}
  We consider the prefix code $K$ as a weighted alphabet. The weight
  of a letter $uc \in K$ is the weight $\Abs{uc}$ when read as a word
  over the weighted alphabet $(A,\Abs{\cdot})$. Let $M_c = \varphi(c)
  M \cap M \varphi(c)$ be the local divisor of $M$ at $\varphi(c)$.
  We let $\psi : K^* \to M_c$ be the homomorphism induced by $\psi(uc)
  = \varphi(cuc)$ for $uc \in K$. By hypothesis
  there exists a weighted Church-Rosser system $T
  \subseteq K^* \times K^*$ for $\psi$ satisfying the statement of the
  theorem. Suppose $\psi(\ell) = \psi(r)$ for $\ell,r \in K^*$ and let
  $\ell = u_1 c \cdots u_j c$ and $r = v_1 c \cdots v_k c$ with
  $u_i,v_i \in \IRR_R(B^*)$. Then
  \begin{align*}
    \varphi(c \ell) 
    &= \varphi(cu_1c) \circ \cdots \circ \varphi(cu_jc) \\
    &= \psi(u_1 c) \circ \cdots \circ \psi(u_j c) \\
    &= \psi(\ell) = \psi(r) = \varphi(c r).
  \end{align*}
  This means that every $T$-rule $\ell \to r$ yields a
  $\varphi$-invariant rule $c\ell \to cr$.  We can transform the
  system $T \subseteq K^* \times K^*$ for $\psi$ into a system $T'
  \subseteq A^* \times A^*$ for $\varphi$ by
  \begin{equation*}
    T' = \set{c\ell \to cr \in A^* \times A^*}{\ell \to r \in T}.
  \end{equation*}
  Since $T$ is confluent and weight-reducing over $K^*$, the system
  $T'$ is confluent and weight-reducing over $A^*$. Combining $R$ and
  $T'$ leads to
    $S = R \cup T'$.
  The left sides of a rule in $R$ and a rule in $T'$ cannot overlap.
  Therefore, $S$ is a weighted Church-Rosser system such that
  $\varphi$ factorizes through $A^* / S$. Suppose that every word in
  $\IRR_T(K^*)$ has length at most $k$. Here, the length is over the
  extended alphabet $K$. Similarly, let every word in $\IRR_R(B^*)$
  have length at most~$m$. Then
  \begin{equation*}
    \IRR_S(A^*) \subseteq \set{u_0 c u_1 \cdots c u_{k'+1}}{ 
      u_i \in \IRR_R(B^*), \; k' \leq k}
  \end{equation*}
  and every word in $\IRR_S(A^*)$ has length at most $(k+2)m$. In
  particular $\IRR_S(A^*)$ and $A^* / S$ are finite.
\end{proof}

By Proposition~\ref{prp:main} one can easily see that it works verbatim when we replace ``\CR system'' by
``finite \swr confluent system of finite index''. Let us call such a  system a \emph{\swr \CR system}. We then can state a sharper  result than \refthm{thm:main} for aperiodic monoids. 

\begin{theorem}[\cite{DiekertKW12tcs}]\label{thm:mainap}
  Let $\varphi : A^*
  \to M$ be a homomorphism to a finite aperiodic monoid $M$. Then there exists a
  \swr \CR system $S\sse A^*\times A^*$  such that 
  $\varphi$ factorizes through $S$.
\end{theorem}

\section{Factorization forests}\label{sec:forests}
Factorization forests where introduce by Imre Simon in \cite{sim90}. 
The main result about factorization forests is a ``nested variant'' of Ramsey's Theorem. Given a regular language it associates to every word 
in a factorization tree of constant height (which depends on the syntactic monoid of the language). This deep insight of Simon has numerous applications, see for example the recent handbook-survey of Colcombet in \cite{Colcombet2021HandbookAuto}. Here, we give a simple proof using local divisors that factorization forests exist.

In the following let $M$ be a finite monoid and $A$ be a finite
alphabet.  A \emph{factorization forest} of a homomorphism $\varphi :
A^* \to M$ is a function $d$ which maps every word $w$ with
length $\abs{w} \geq 2$ to a factorization $d(w) = (w_1, \ldots,
w_{n})$ of $w = w_1 \cdots w_{n}$ such that $n \geq 2$ and
either $n=2$ or $\varphi(w_1) = \cdots = \varphi(w_{n})$ is idempotent
in $M$. An element $e \in M$ is idempotent, if $e^2 = e$.

Usually we avoid empty words in the factorization and then,
with respect to $d$,  every non-empty word can be visualized as a tree
where the leaves are labeled with letters. 
Thus, $d$ defines a \emph{factorization tree} for each word $w$.

The \emph{height} $h$ of a word $w$ is defined as
\begin{equation*}
  h(w) = 
  \begin{cases}
    0 & \text{if } \abs{w} \leq 1 \\
    1 + \max\oneset{h(w_1), \ldots, h(w_{n})} & 
    \text{if } d(w) = (w_1, \ldots, w_{n})
  \end{cases}
\end{equation*}

Let us say that $d$ is \emph{optimal} if the for each word $w$ with
length $\abs{w} \geq 2$ the height $d(w)$ is minimal.  A famous
theorem of Simon says that every homomorphism $\varphi : A^* \to
M$ has an optimal factorization forest of height $\Oh(\abs{M})$. The original proof of Simon was rather technical. A
simplified proof with a worse bound based on the Krohn-Rhodes decomposition
was found by Simon in \cite{sim92}.  Later improved bounds were
found using Green's relations \cite{ChaLeu04,kuf08mfcs}. 
However, in many cases it is enough to know that there is a
factorization forest of bounded height but the actual bound is not
used.  Moreover, the hard part in the proof usually is when the
underlying monoid is aperiodic.  So, it might be a good idea to base a
proof upon local divisors. As we see now, it works:

\begin{theorem}[Simon]\label{thm:factfor:ap}
  Let $M$ be a finite 
  monoid.  There is a constant $h(\abs M)$ such that every
  homomorphism $\varphi : A^* \to M$ has a factorization forest
  of height at most $h(\abs M)$.
\end{theorem}

\begin{proof}
  Let $\varphi : A^* \to M$ be a homomorphism. The first thing we
  observe is that we may assume $\abs{A} \leq \abs{M}$, because
  if different letters are mapped to the same element in $M$ we can
  identify these letters without changing the height. Thus, actually
  we may assume that $A$ is a subset of $M$ and $\varphi$ is
  induced by this inclusion.

  The case where $M$ is a finite group $G$ is rather simple and nicely
  exposed in \cite{ChaLeu04}. For convenience we repeat the argument.
  Let $g_1 \cdots g_n$ be a (long) word of group elements.  The basic
  idea is to perform an induction on the size of the \emph{prefix set}
  which is defined by
  \begin{equation*}
    P(g_1 \cdots g_n) =\set{g_1 \cdots g_i \in G}{1\leq i \leq  n} 
  \end{equation*}
  Choose some maximal subset $\oneset{i_1,\ldots, i_t} $ of
  $\oneset{1,\ldots, n}$ 
  such that all prefixes $g_1 \cdots g_{i_j}\in G$ are equal.  We may
  assume that $t\geq 2$.
  Let $i_0 = 0$ and $i_{t+1} =n$. Consider the $t+1$ factors $v_j =
  g_{i_{j-1}+1} \cdots g_{i_j}$. Thus, the word $g_1 \cdots g_n$
  facorizes as $v_1 \cdots v_{t+1}$. Let $h_j \in G$ be the evaluation
  of the word $v_j$ in $G$ for $1 \leq j \leq t+1$.  Then we have $h_2
  = \cdots = h_t =1$ and we are done if the size of each prefix set
  $P(v_j) $ is striclty less than the size of $P(g_1 \cdots g_n)$.
  It is clear that the size of the prefix set of $g_1 \cdots g_{i_1}$
  has decreased, but this is actually the case for all $g_{i_{j-1}+1}
  \cdots g_{i_j}$. Indeed for $2\leq j \leq t+1$ we have
  \begin{equation*}
    h_1\cdot P(g_{i_{j-1}+1} \cdots g_{i_j}) \subseteq 
    P(g_1 \cdots g_n)\setminus \smallset{h_1}
  \end{equation*}
  The result for $G$ follows because the translation by any group
  element is injective.

  Now let $M$ be an arbrary finite monoid.  Consider a word $w$ where
  the letters are elements of $M$. If all letters are units, then $w$
  is mapped to a subgroup $G$ of $M$. We are done by the case above.
  Therefore we may assume that in $w$ some letter $c$ occurs which is
  not a unit.  In particular $1 \not\in cM \cap Mc$. Then $w$ admits a
  factorization
  \begin{equation*}
    w = w_0 c w_1 c w_2 \cdots c w_k c w_{k+1}
  \end{equation*}
  where $c$ does not occur in any $w_i$ for $0 \leq i \leq k+1$. By
  induction on the alphabet size of $w$, there exist factorization
  trees of small height for all $w_i$, and thus, for each factor
  $cw_i$ as well. This allows us to treat factors $cw_i$ as
  letters. More precisely, let $w' = c w_1 c w_2 \cdots c w_k$. It is
  clear that we may assume $w=w'$. We read each factor $cw_i$ as a new
  letter $b_i$ in some alphabet $T$ and we read $w'$ as a word
  $b_1 \cdots b_k$ in $T^*$.  Consider the homomorphism $\psi:
  T^* \to M_c$ induced by $\psi(cw_i) = cw_ic$, where $M_c = cM
  \cap M c$ is the local divisor of $M$ at $c$.  By induction on the
  size of the monoid, there exists a factorization forest $d_c$ for
  the homomorphism $\psi$ of height $h(\abs{M} -1)$.  Inductively we
  transform the factorization tree of $b_1 \cdots b_k$ into a
  factorization tree of $w'$.
  If $d_c(b_1 \cdots b_k) = (b_1 \cdots b_i, \, b_{i+1} \cdots b_k)$
  then we let
  \begin{equation*}
    d(w') =  (c_1w_1 \cdots cw_i, \, cw_{i+1} \cdots cw_k)
  \end{equation*}
  We now treat the case $d_c(b_1 \cdots b_{k})
  = (v_1, v_2, \ldots, v_{\ell})$ with $\ell \geq 3$.
  Each $v_j$ corresponds to an element in $M_c^*$ of the form $(c
  w_{i_j} c) \cdots (c w_{i_{j+1}-1} c)$, and we let $\widetilde{w}_j
  = w_{i_j} c \cdots c w_{i_{j+1}-1} \in M$.
  We choose some maximal subset $\smallset{i_1,\ldots, i_t} $ of
  $\smallset{1,\ldots, \ell} $ such that we have both,
  $c\widetilde{w}_{i_1} = \cdots = c\widetilde{w}_{i_t}$ in $M$ and a
  factorization of the form
    \begin{equation*}
    c \widetilde{w}_1 c \widetilde{w}_2 \cdots c \widetilde{w}_{\ell}
    =
    c \overline{w}_1 c \widetilde{w}_{i_1} 
    c \overline{w}_2 c \widetilde{w}_{i_2} \cdots 
    c \overline{w}_t c \widetilde{w}_{i_t}
    c \overline{w}_{t+1}
  \end{equation*}
  for some $\overline{w}_i \in M^*$.  A straightforward computation
  shows that elements $c \overline{w}_j c \widetilde{w}_{i_j}$ are all
  identical and idempotent in $M$ since the $v_i$ are idempotent in
  $M_c$. This yields the factorizations
  \begin{align*}
    d(c \widetilde{w}_1 
    \cdots c \widetilde{w}_{\ell})
    &= (c \overline{w}_1 c \widetilde{w}_{i_1} 
    \cdots 
    c \overline{w}_t c \widetilde{w}_{i_t},\, c \overline{w}_{t+1}) \\
    d(c \overline{w}_1 c \widetilde{w}_{i_1} 
    \cdots 
    c \overline{w}_t c \widetilde{w}_{i_t})
    &= (c \overline{w}_1 c \widetilde{w}_{i_1},\, 
    c \overline{w}_2 c \widetilde{w}_{i_2},\, 
    \ldots, \, 
    c \overline{w}_t c \widetilde{w}_{i_t})
  \end{align*}
  The next step splits each $c \overline{w}_j c \widetilde{w}_{i_j}$
  in two factors using $d(c \overline{w}_j c \widetilde{w}_{i_j}) =(c
  \overline{w}_j,\, c \widetilde{w}_{i_j}) $.  The factor $c
  \widetilde{w}_{i_j}$ corresponds to $v_j$, and either $c
  \overline{w}_j$ corresponds to some $v_{j'}$ or the element $c
  \widetilde{w}_{i_j}$ does not occur in the factor $c
  \overline{w}_j$. Thus in at most $\abs{M}$ transformation steps each
  remaining factor is of the form $w_i$ and we are done, since the
  letter $c$ did vanish. 
\end{proof}


\begin{thebibliography}{10}

\bibitem{ChaLeu04}
J.~Chalopin and H.~Leung.
\newblock On factorization forests of finite height.
\newblock {\em Theoretical Computer Science}, 310(1-3):489--499, 2004.

\bibitem{Colcombet2021HandbookAuto}
{\Th}.~Colcombet.
\newblock The factorisation forest theorem.
\newblock In J.-{\'{E}}. Pin, editor, {\em Handbook of Automata Theory}, pages
  653--693. European Mathematical Society Publishing House, Z{\"{u}}rich,
  Switzerland, 2021.

\bibitem{CrickGO1957}
F.~H.~C. Crick, J.~S. Grith, and L.~E. Orgel.
\newblock Codes without commas.
\newblock {\em Proc. National Academy of Sciences of the United States of
  America}, 43:416–--421, 1957.

\bibitem{DartoisGK21}
L.~Dartois, P.~Gastin, and S.~N. Krishna.
\newblock {SD}-regular transducer expressions for aperiodic transformations.
\newblock In {\em 36th Annual {ACM/IEEE} Symposium on Logic in Computer
  Science, {LICS} 2021, Rome, Italy, June 29 - July 2, 2021}, pages 1--13.
  {IEEE}, 2021.

\bibitem{dg02jcss}
V.~Diekert and P.~Gastin.
\newblock {LTL} is expressively complete for {M}azurkiewicz traces.
\newblock {\em Journal of Computer and System Sciences}, 64:396--418, 2002.

\bibitem{dg06IC}
V.~Diekert and P.~Gastin.
\newblock Pure future local temporal logics are expressively complete for
  {M}azurkiewicz traces.
\newblock {\em Information and Computation}, 204:1597--1619, 2006.
\newblock Conference version in LATIN 2004, LNCS 2976, 170--182, 2004.

\bibitem{dg08SIWT}
V.~Diekert and P.~Gastin.
\newblock First-order definable languages.
\newblock In J.~Flum, E.~Gr{\"a}del, and {\Th}.~Wilke, editors, {\em Logic and
  Automata: History and Perspectives}, Texts in Logic and Games, pages
  261--306. Amsterdam University Press, 2008.

\bibitem{dgk08ijfcs}
V.~Diekert, P.~Gastin, and M.~Kufleitner.
\newblock A survey on small fragments of first-order logic over finite words.
\newblock {\em International Journal of Foundations of Computer Science},
  19:513--548, 2008.
\newblock Special issue DLT 2007.

\bibitem{DiekertKufleitner14tocs}
V.~Diekert and M.~Kufleitner.
\newblock Omega-rational expressions with bounded synchronization delay.
\newblock {\em Theory of Computing Systems}, 56:686--696, 2015.

\bibitem{DiekertK2016tcs}
V.~Diekert and M.~Kufleitner.
\newblock A survey on the local divisor technique.
\newblock {\em Theoretical Computer Science}, 610:13--23, 2016.

\bibitem{DiekertKRW15jacm}
V.~Diekert, M.~Kufleitner, K.~Reinhardt, and T.~Walter.
\newblock Regular languages are {Church}-{Rosser} congruential.
\newblock {\em Journal of the ACM}, 62:39:1--39:20, 2015.
\newblock Conference version in Proc.~ICALP 2012 (Track B), LNCS 7392, 177--188
  (2012).

\bibitem{DiekertKS12fi}
V.~Diekert, M.~Kufleitner, and B.~Steinberg.
\newblock The {K}rohn-{R}hodes theorem and local divisors.
\newblock {\em Fundamenta Informaticae}, 116(1-4):65--77, 2012.

\bibitem{DiekertKW12tcs}
V.~Diekert, M.~Kufleitner, and P.~Weil.
\newblock Star-free languages are {Church}-{Rosser} congruential.
\newblock {\em Theoretical Computer Science}, 454:129--135, 2012.

\bibitem{DiekertWalter16}
V.~Diekert and T.~Walter.
\newblock Characterizing classes of regular languages using prefix codes of
  bounded synchronization delay.
\newblock In I.~Chatzigiannakis, M.~Mitzenmacher, Y.~Rabani, and D.~Sangiorgi,
  editors, {\em 43rd International Colloquium on Automata, Languages, and
  Programming (ICALP 2016)}, Leibniz International Proceedings in Informatics
  (LIPIcs), pages 129:1--129:13, 2016.

\bibitem{Eastman1965ieee}
W.~L. Eastman.
\newblock On the construction of comma-free codes.
\newblock {\em IEEE Transactions on Information Theory}, 11:263--267, 1965.

\bibitem{FeTo02}
A.~{Fern{\'a}ndez L{\'o}pez} and M.~{Toc{\'o}n Barroso}.
\newblock The local algebras of an associative algebra and their applications.
\newblock In J.~Misra, editor, {\em Applicable Mathematics in the Golden Age},
  pages 254--275. Narosa, 2002.

\bibitem{FimmelMPSS2019}
E.~Fimmel, C.~Michel, F.~Pirot, J.-S. Sereni, and L.~Str\"ungmann.
\newblock Comma-free codes over finite alphabets, 2019.

\bibitem{GolombG65ic}
S.~W. Golomb and B.~Gordon.
\newblock Codes with bounded synchronization delay.
\newblock {\em Information and Control}, 8:355--372, 1965.

\bibitem{GolombGW58}
S.~W. Golomb, B.~Gordon, and L.~R. Welch.
\newblock Comma-free codes.
\newblock {\em Canadian Journal of Mathematics}, 10:202–209, 1958.

\bibitem{GrayS2022}
R.~D. Gray and B.~Steinberg.
\newblock A {Lyndon}’s identity theorem for one-relator monoids.
\newblock {\em Selecta Mathematica}, 28:59:(53 pages), 2022.

\bibitem{kam68}
J.~A.~W. Kamp.
\newblock {\em Tense Logic and the Theory of Linear Order}.
\newblock PhD thesis, University of California, Los Angeles (California), 1968.

\bibitem{kuf08mfcs}
M.~Kufleitner.
\newblock The height of factorization forests.
\newblock In {\em MFCS}, volume 5162 of {\em Lecture Notes in Computer
  Science}, pages 443--454. Springer-Verlag, 2008.

\bibitem{Kuperberg14lmcs}
D.~Kuperberg.
\newblock Linear temporal logic for regular cost functions.
\newblock {\em Logical Methods in Computer Science}, 10:1--37, 2014.

\bibitem{McNaughtonNO88}
R.~McNaughton, P.~Narendran, and F.~Otto.
\newblock {Church}-{Rosser} {T}hue systems and formal languages.
\newblock {\em Journal of the ACM}, 35(2):324--344, 1988.

\bibitem{Mey72}
K.~Meyberg.
\newblock Lectures on algebras and triple systems.
\newblock Technical report, University of Virginia, Charlottesville, 1972.

\bibitem{Narendran84phd}
P.~Narendran.
\newblock {\em {Church}-{Rosser} and related {T}hue systems}.
\newblock PhD thesis, Dept. of Mathematical Sciences, Rensselaer Polytechnic
  Institute, Troy, NY, USA, 1984.

\bibitem{NiemannPhD02}
G.~Niemann.
\newblock {\em Church-{R}osser Languages and Related Classes}.
\newblock Kassel University Press, 2002.
\newblock PhD thesis.

\bibitem{NiemannO05}
G.~Niemann and F.~Otto.
\newblock The {Church}-{Rosser} languages are the deterministic variants of the
  growing context-sensitive languages.
\newblock {\em Information and Computation}, 197:1--21, 2005.

\bibitem{NiemannW02}
G.~Niemann and J.~Waldmann.
\newblock Some regular languages that are {Church}-{Rosser} congruential.
\newblock In {\em DLT'01, Proceedings}, volume 2295 of {\em LNCS}, pages
  330--339. Springer, 2002.

\bibitem{pin86}
J.-{\'E}. Pin.
\newblock {\em {Varieties of Formal Languages}}.
\newblock North Oxford Academic, London, 1986.

\bibitem{PlaceZeitounICALP2019}
T.~Place and M.~Zeitoun.
\newblock On all things star-free.
\newblock In C.~Baier, I.~Chatzigiannakis, P.~Flocchini, and S.~Leonardi,
  editors, {\em 46th International Colloquium on Automata, Languages, and
  Programming (ICALP 2019)}, volume 132 of {\em Leibniz International
  Proceedings in Informatics (LIPIcs)}, pages 126:1--126:14, Dagstuhl, Germany,
  2019. Schloss Dagstuhl--Leibniz-Zentrum fuer Informatik.

\bibitem{reinhardtT03}
K.~Reinhardt and D.~Th{\'e}rien.
\newblock Some more regular languages that are {C}hurch {R}osser congruential.
\newblock In {\em 13. Theorietag, Automaten und Formale Sprachen, Herrsching,
  Germany}, pages 97--103, 2003.

\bibitem{Scholtz1969ieee}
R.~A. Scholtz.
\newblock Maximal and variable word-length comma-free codes.
\newblock {\em IEEE Transactions on Information Theory}, 15:300--306, 1969.

\bibitem{sch65sf}
M.-P. Sch{\"u}tzenberger.
\newblock On finite monoids having only trivial subgroups.
\newblock {\em Information and Control}, 8:190--194, 1965.

\bibitem{Schutzenberger1974b}
M.-P. Sch{\"u}tzenberger.
\newblock Sur les monoides finis dont les groupes sont commutatifs.
\newblock {\em Rev. Fran\c{c}aise Automat. Informat. Recherche Op\'erationnelle
  S\'er. Rouge}, 8(R-1):55--61, 1974.

\bibitem{schue75}
M.-P. Sch{\"u}tzenberger.
\newblock Sur certaines op{\'e}rations de fermeture dans les langages
  rationnels.
\newblock In {\em Symposia Mathematica, Vol.XV (Convegno di Informatica
  Teorica, INDAM, Roma, 1973)}, pages 245--253. Academic Press, London, 1975.

\bibitem{sim90}
I.~Simon.
\newblock Factorization forests of finite height.
\newblock {\em Theoretical Computer Science}, 72(1):65--94, 1990.

\bibitem{sim92}
I.~Simon.
\newblock A short proof of the factorization forest theorem.
\newblock In M.~Nivat and A.~Podelski, editors, {\em Tree Automata and
  Languages}, pages 433--438. Elsevier, 1992.

\bibitem{Wil98}
{\Th}.~Wilke.
\newblock {\em Classifying Discrete Temporal Properties}.
\newblock Habilitationsschrift, Universit{\" a}t Kiel, April 1998.

\bibitem{Wil99stacs}
{\Th}.~Wilke.
\newblock Classifying discrete temporal properties.
\newblock In C.~Meinel and S.~Tison, editors, {\em Proc. 16th Annual Symposium
  on Theoretical Aspects of Computer Science (STACS'99), Trier (Germany),
  1999}, volume 1443 of {\em Lecture Notes in Computer Science}, pages 32--46,
  Heidelberg, 1999. Springer-Verlag.
\newblock Invited Lecture.

\end{thebibliography}
{\small
\newcommand{\Th}{Th}\newcommand{\Ch}{Ch}\newcommand{\Yu}{Yu}\newcommand{\Zh}{Zh}\newcommand{\St}{St}\newcommand{\curlybraces}[1]{\{#1\}}

}

\end{document}